\newtheorem{theorem}{Theorem}[section]
\newtheorem{prop}[theorem]{Proposition}
\newtheorem{cor}[theorem]{Corollary}
\newtheorem{remark}[theorem]{Remark}
\newtheorem{definition}[theorem]{Definition}
\newtheorem{defin}[theorem]{Definition}
\theoremstyle{definition}
\newtheorem{example}[theorem]{Example}
\begin{document}


\title{Logically automorphically equivalent knowledge bases}

\maketitle

\begin{center}

\author{

 E.~Aladova, T.~Plotkin

\smallskip
 {\small
              Bar Ilan University,

          5290002, Ramat Gan, Israel

               {\it E-mail address:} aladovael (at) mail.ru

                                    \hspace{80pt} plot (at) macs.biu.ac.il


            }
}
\end{center}

\begin{abstract}
Knowledge bases theory provide an important example of the field
where applications of universal algebra and algebraic logic look
very natural, and their interaction with practical problems
arising in computer science might be very productive.

In this paper we study the equivalence problem for knowledge
bases.  Our  interest is to find out how the informational
equivalence is related to the logical description of knowledge.
The main objective of this paper is logically automorphically
equivalent knowledge bases. As we will see this notion gives us a
good enough characterization of knowledge bases.
\end{abstract}



\section{Motivation}




Research work on databases was began at the end of the  sixties.
The classical work of Codd~\cite{Codd} gave the theoretical
foundation of databases. From this time  advances in database
theory  are closely related  to mathematical logic, theory of
algorithms and general algebra.

Knowledge base systems go far beyond the relational database
model. They require complex data processing, which may include
rules. Knowledge base systems combine database features and
artificial intelligence techniques.

Investigations in  database theory  led to the  formal definitions
for various types of databases, 
whereas knowledge bases are often defined informally.
There are different views on knowledge bases which are discussed,
in particular,  in the information technology, strategic
management, and organizational theory literature.
The informal representation of a knowledge base, for example, does
not allow identifying the duplicate information represented in
different formats by different knowledge base implementations. The
formal mathematical model of a knowledge base allows to get formal
solutions for various problems arise in knowledge bases theory, in
particular, for equivalence problem for knowledge bases.

Plotkin in~\cite{P-DB} proposed a mathematical model of a database
and gave a formal definition of the databases equivalence concept.
Researches in this area give rise to the  algebraic model of a
knowledge base which was  introduced and developed  in
\cite{Plotkin_AG}, \cite{Seven}, \cite{P-DB}.  The main
peculiarity of this approach is that a database and a knowledge
base is considered as a certain algebraic structure. The
mathematical model of a knowledge base (or database), which is
viewed  as an algebraic structure helps us to understand the
nature of a concrete real knowledge base (database), and it
enables to solve various problems in knowledge bases (databases)
theory.
The model of a knowledge base involves various ideas of universal
algebra, algebraic logic and  algebraic geometry.
Such a model is useful for many reasons.
There are a lot of specialized knowledge bases and it is desirable
to determine their characteristic properties without referring to
their complicated structure and without studying their detailed
architecture. Mathematical model of a knowledge base allows  to
distinguish some invariants
of knowledge bases which rigidly determine them.

In this paper we discuss the equivalence problem for knowledge
bases.
%
This problem goes back to the similar one for databases. It was
first posed by Aho, Sagiv, Ullman in~\cite{ASU} and  Beeri,
Mendelzon, Sagiv, Ullman in~ \cite{BeMendSagUll} and gave rise to
the notion of databases schemes equivalence. They propose an
approach to databases schemes equivalence based on the notion of a
fixed point. In this setting two relational database schemes are
equivalent if their sets of fixed points coincide.
 Correspondingly, two relational databases are equivalent if their sets of all fixed points intersected
 with the sets of feasible instances coincide. This and other approaches  to the database equivalence problem
 had been studied in  numerous papers (see \cite{AABM}, \cite{BM}, \cite{Ba}, \cite{Beniaminov},
 \cite{Heu}, \cite{Pt_1}, \cite{Ri},
 etc.).

We are interested in a special kind of equivalence, namely,
\emph{informational equivalence}.
%
%
Informally, one can say that  two knowledge bases are
informationally equivalent if and only if all information that can
be retrieved from the one knowledge base can be also obtained from
the other one and vice versa.
The formal mathematical model of a knowledge base allows  to solve
formally the informational equivalence problem. Various solutions
for the knowledge bases equivalence problem based on algebraic
geometry approach were obtained in \cite{KnPlot-KB-inf-equiv},
\cite{Plotkin-AlgCatDB}, \cite{Seven}, \cite{P-DB}, \cite{PP_AA},
\cite{PP_AU}, \cite{PlKn-Verif}.
This paper continues the research of the knowledge bases
equivalence problem based on logical geometry approach which was
started in \cite{APP}.

The paper is organized as follows. In Section~\ref{sec:Prelim} we
give a brief review of basic notions and notations from universal
algebraic geometry. In particular, we define Halmos categories and
construct the Galois correspondence which is very important in our
considerations. The material of this section can be found in the
papers of B.~Plotkin (\cite{Plotkin_AG}, \cite{Pl_GAGTA},
\cite{Seven}, see also  \cite{BI-90-I} for some detailed proofs).
In Section~\ref{sec:KB-model} we introduce a knowledge base model
under consideration.
Section~\ref{sec:KB-equiv}   deals with 
various
equivalences of knowledge bases and connections between them. In
particular, we give the formal definition of  informationally
equivalent knowledge bases.
In Section~\ref{sec:LogAutEquiv-KB} we introduce one more
equivalence for knowledge bases, namely, \emph{logically
automorphical equivalence}, and  present the main result of the
paper which state that \emph{logically automorphically equivalent
knowledge bases are informationally equivalent}
(Theorem~\ref{th:Aut-Equiv-KB}).
%

\section{Preliminaries: Mathematical apparatus}\label{sec:Prelim}

\subsection{Basic notions and notations}\label{sec:BasicNotions}


Let $X^{0}=\{x_1,\dots ,x_n , \dots \}$ be an infinite set of
variables. Denote by $\Gamma$ the collection of all finite subsets
$X$ of $X^{0}$.

Let $\Theta$ be a \emph{variety of algebras}, that is a class of
algebras satisfying a set of identities (see, for instance, 
\cite{Malcev}, \cite{P-DB}). We denote by $Var(H)$ the variety
generated by the algebra $H$.

Denote by $W(X)$ the \emph{free algebra} in the variety $\Theta$
with free generating set $X$, $X\in \Gamma$. All free algebras
$W(X)\in \Theta$, form a \emph{category of free algebras}
$\Theta^0$ with homomorphisms $s : W(X) \to W(Y)$ as morphisms,
$X, Y\in\Gamma$.

By a \emph{model} $\cal H$ we mean a triple $(H, \Psi, f)$, where
$H$ is an algebra from $\Theta$, $\Psi$ is a set of relation
symbols $\varphi$,  $f$ is an interpretation of all $\varphi$ in
$H$ (see, for instance, \cite{ChangKeisler_ModelTh},
\cite{Marker}, \cite{P-DB}).

Take an algebra $H$ in $\Theta$. A \emph{point} $(a_1, \ldots,
a_n)$ from $n$-th Cartesian power of $H$ can be represented as a
map $\mu: X \to H$ such that $a_i=\mu(x_i)$. This map can be
extended up to homomorphism of algebras $\mu: W(X)\to H$.
Thus, the point $(a_1, \ldots, a_n)$ can be also viewed as a
homomorphism $\mu: W(X) \to H$.

Denote by $Hom(W(X),H)$ the set of all  homomorphism  from $W(X)$
to $H$. We will regard $Hom(W(X),H)$ as an \emph{affine space}.

All affine spaces $Hom(W(X),H)$ with various $X\in \Gamma$
constitute the \emph{category $\Theta^{0}(H)$ of  affine spaces}
with morphisms
$$
\widetilde s : Hom(W(X),H) \to Hom(W(Y),H),
$$
for each homomorphism of free algebras $s : W(Y) \to W(X)$. The
map $\widetilde s$ is defined as $\widetilde s(\mu)=\mu s$, where
$\mu: W(X) \to H$, $\widetilde s(\mu): W(Y) \to H$.

The categories $\Theta^{0}$ and $\Theta^{0}(H)$ are very important
for further considerations. Moreover, the following theorem takes
place.

\begin{theorem}[\cite{MPP-Lie}]
The categories $\Theta^{0}$ and $\Theta^{0}(H)$ are  dual if and
only if $Var(H) =\Theta$.
\end{theorem}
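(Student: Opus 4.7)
The plan is to exhibit the obvious contravariant functor $F\colon \Theta^{0}\to \Theta^{0}(H)$ sending $W(X)\mapsto Hom(W(X),H)$ and $s\mapsto \widetilde{s}$, and to show that $F$ is an anti-equivalence of categories precisely when $Var(H)=\Theta$. Essential surjectivity on objects is immediate from the definition of $\Theta^{0}(H)$, and the category $\Theta^{0}(H)$ was \emph{set up} so that every morphism has the form $\widetilde{s}$, giving fullness. So the entire content of the theorem reduces to characterising when the assignment $s\mapsto \widetilde{s}$ is injective, that is, when $F$ is faithful.

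First I would unwind the faithfulness condition. If $s_1,s_2\colon W(Y)\to W(X)$ satisfy $\widetilde{s_1}=\widetilde{s_2}$, then $\mu s_1=\mu s_2$ for every $\mu\in Hom(W(X),H)$; evaluating at a generator $y\in Y$ this says $\mu(s_1(y))=\mu(s_2(y))$ for all $\mu$. Thus faithfulness of $F$ is equivalent to the following \emph{separation property}: for all $X\in\Gamma$ and all $w_1,w_2\in W(X)$,
\[
\bigl(\forall\mu\in Hom(W(X),H):\ \mu(w_1)=\mu(w_2)\bigr)\ \Longrightarrow\ w_1=w_2.
\]
The antecedent is by definition the statement that the identity $w_1\equiv w_2$ holds in $H$, hence (Birkhoff) in $Var(H)$, while the consequent $w_1=w_2$ in $W(X)$ is the statement that $w_1\equiv w_2$ holds in $\Theta$ (since $W(X)$ is free in $\Theta$ and identities in a variety are detected by free algebras). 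So the separation property is precisely the equality of the identity sets of $\Theta$ and of $Var(H)$, i.e.\ $Var(H)=\Theta$.

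For the $(\Leftarrow)$ direction this finishes things: assuming $Var(H)=\Theta$ the implication above is automatic, $F$ is faithful, and combining with fullness and essential surjectivity we get the anti-equivalence. For $(\Rightarrow)$ I would argue by contraposition: if $Var(H)\subsetneq\Theta$, pick an identity $w_1\equiv w_2$ in $Var(H)$ that fails in $\Theta$, realised by distinct $w_1,w_2\in W(X)$ for some finite $X$. Taking $Y=\{y\}$ and defining $s_i\colon W(Y)\to W(X)$ by $s_i(y)=w_i$ gives $s_1\neq s_2$ but $\widetilde{s_1}=\widetilde{s_2}$, so $F$ fails to be faithful; any hypothetical anti-equivalence between $\Theta^{0}$ and $\Theta^{0}(H)$ would then have to send these distinct morphisms to distinct morphisms, which is impossible since $Hom(W(X),H)$ literally cannot distinguish them.

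The main obstacle I anticipate is being careful about what ``duality'' means: one has to make sure that showing the \emph{natural} functor $F$ is (or is not) an anti-equivalence really settles the existence of \emph{any} anti-equivalence. In the forward direction this is delicate because in principle there could be a non-standard anti-equivalence; the clean way around it is to observe that every morphism of $\Theta^{0}(H)$ is of the form $\widetilde{s}$ by definition, so the Hom-sets on the $\Theta^{0}(H)$ side are quotients of the corresponding Hom-sets in $\Theta^{0}$ via $F$, and any anti-equivalence must be compatible with counting these morphisms. The rest is a routine verification using the Birkhoff-type fact that two terms agree in a variety iff they agree in every free algebra of that variety, which is the one external input used.
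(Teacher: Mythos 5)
The paper does not actually prove this statement; it is quoted from \cite{MPP-Lie}, so there is no in-paper proof to compare against. Your argument is the standard one and is correct for the intended reading of the theorem: since every object of $\Theta^{0}(H)$ is of the form $Hom(W(X),H)$ and every morphism is of the form $\widetilde{s}$ by construction, the canonical contravariant functor is surjective on objects and full, so the entire content is the faithfulness of $s\mapsto\widetilde{s}$; your reduction of faithfulness to the separation property, and of that property to the coincidence of the identities of $H$ with those of $\Theta$ (via Birkhoff and the freeness of $W(X)$ in $\Theta$), together with the automatic inclusion $Var(H)\subseteq\Theta$, is exactly right. The one soft spot is the issue you flag yourself: if ``dual'' were read as ``there exists \emph{some} anti-equivalence'', your counting remark does not close the forward direction, since a hypothetical anti-equivalence need not match objects the way the canonical functor does. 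Indeed, $\Theta^{0}(H)$ is canonically dual to the category of free algebras of $Var(H)$ (because $\widetilde{s_1}=\widetilde{s_2}$ exactly when $s_1,s_2$ agree after passing to the $Var(H)$-free quotients), so the abstract question would reduce to whether $\Theta^{0}$ and $Var(H)^{0}$ can be isomorphic as categories when $Var(H)\subsetneq\Theta$ --- a genuinely different and harder question that your argument does not address. In \cite{MPP-Lie} the statement is about the canonical functor being a duality, so for the theorem as intended your proof is complete; it would be worth stating explicitly that this is the sense of ``dual'' you are using.
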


\subsection{Halmos categories}\label{sec:HalmosCat}


Halmos categories  were introduce in papers of B.I.~Plotkin
\cite{Plotkin_AG}, \cite{Seven}. 
Halmos categories are related to the first-order logic in a way
analogous to the relationship between boolean algebras and
propositional logic. Such an approach allows us to use technics
and structures of algebraic logic (see \cite{Halmos},
\cite{P-DB}). The immediate advantage of this phenomenon is that
we can view queries to a knowledge base and replies to these
queries as objects of the same nature, i.e., objects of Halmos
categories. Then the transition query-reply can be treated as a
functor (for details see Section~\ref{sec:KB-model}).

We start from the notion of an existential quantifier on a boolean
algebra. Let $B$ be a boolean algebra. Existential quantifier on
$B$ is a unary operation $\exists : B \to B$ such that the
following conditions hold:
\begin{enumerate}
\item $\exists \ 0 = 0$, \item $a \leq \exists a$, \item $\exists
(a \wedge \exists b) = \exists a \wedge \exists b$.
\end{enumerate}

Universal quantifier $\forall : B \to B$ is dual to $\exists : B
\to B$, they are related by  $ \forall a=\neg(\exists (\neg a))$.

\begin{definition}\label{Def:ExtendedBA}
Let a set of variables $X=\{x_1,\dots ,x_n\}$ and a set of
relations $\Psi$ be given. A boolean algebra $B$ is called an
extended boolean algebra over $W(X)$ relative to $\Psi$, if

1. the existential quantifier $\exists x$ is defined on $B$ for
all $x \in X$, and  $\exists x \exists y = \exists y \exists x$
for all $x,y \in X$;

2. to every  relation symbol $\varphi \in \Psi$ of arity
$n_\varphi$ and a collection of elements $w_1,\ldots,
w_{n_\varphi}$ from $W(X)$ there corresponds a nullary operation
(a constant) of the form $\varphi(w_1,\ldots, w_{n_\varphi})$ in
$B$.
\end{definition}

Thus, the signature $L_X$ of an extended boolean algebra consists
of the boolean connectives, existential quantifiers  $\exists x$
and constants $\varphi(w_1,\ldots, w_{n_\varphi})$:
$$
L_X = \{\vee, \wedge, \neg, \exists x, M_X \},
$$
where $M_X$ is a set of all $\varphi(w_1, \ldots, w_{n_\varphi})$.

There are two important examples of extended boolean algebras.

\begin{example}\label{ex:Bool}
Let a model ${\cal H}=(H,\Psi,f)$ be given. Let $Bool(W(X),H)$ be
the boolean algebra of all subsets in $Hom(W(X),H)$. One can equip
the boolean algebra $Bool(W(X),H)$ with the structure of an
extended boolean algebra (\cite{APP}, \cite{BI-90-I},
\cite{P-DB}). We denote this extended boolean algebra by
$Hal_{\Theta}^{X}({\cal H})$.
\end{example}

\begin{example}\label{ex:Phi}
Another important example of an extended boolean algebra is
presented by the algebra of formulas $\Phi_{0}(X)=\mathfrak
L_X/\tau_X$, where $\mathfrak L_X$ is the absolutely free algebra
in the signature $L_X$ over the set $M_X$, $\tau_X$ is a
congruence relation on $\mathfrak L_X$ defined by the rule: $
u\tau_X v \mbox{ if and only if } \vdash (u\to v)\wedge (v\to u),
$ $u, v \in \mathfrak L_X$. Boolean operations and quantifiers on
$\Phi_{0}(X)$ are naturally inherited from $\mathfrak L_X$. For
more details see \cite{APP}, \cite{BI-90-I}, \cite{Plotkin_AG},
\cite{P-DB}, \cite{PAP}.
\end{example}

Now we define \emph{a Halmos category} which plays a very
important role in further considerations.

\begin{definition}
A category $\mathbb H$ is a Halmos category if:
\begin{enumerate}
\item
 Every its object has the form $\mathbb H(X)$, where $\mathbb H(X)$ is an extended
boolean algebra over $W(X)$.

\item
 The morphisms in $\mathbb H$ correspond to morphisms in the category $\Theta^{0}$.
To every morphism  $s: W(X) \to W(Y)$ in $\Theta^0$ it corresponds
a morphism $s_{*}: \mathbb H(X) \to \mathbb H(Y)$ in $\mathbb H$
such a way that
\begin{enumerate}
\item %
the transitions $W(X)\to \mathbb H(X)$ and $s \to s_{*}$ determine
a covariant functor from $\Theta^{0}$ to $\mathbb H$.

\item %
$s_*: \mathbb H(X) \to \mathbb H(Y)$ is a homomorphism of
corresponding boolean algebras.
\end{enumerate}

\item %
There are special identities controlling the interaction of
morphisms with quantifiers and constant (for details see
\cite{Plotkin_AG}, \cite{PAP}, \cite{PP_D}).
\end{enumerate}
\end{definition}

Next two examples of Halmos categories are based on examples of
extended boolean algebras above.

\begin{example}\label{ex:CategoryHal}
{\bf Category $Hal_{\Theta}(\cal H)$.} Objects of this category
are extended boolean algebras $Hal_{\Theta}({\cal H})$ from
Example~\ref{ex:Bool} for various $X\in \Gamma$. Morphisms
$$
s_{*}: Hal^{X}_{\Theta}({\cal H})\to Hal^{Y}_{\Theta}({\cal H}),
$$
 are defined as follows:
$$
\mu\in s_{*}A  \Leftrightarrow \ \mu s\in A,
$$
where $\mu:W(Y)\to H$, $A\subset Hom(W(X),H)$, $s:W(X)\to W(Y)$.

\begin{remark}\label{rem:s*A}
A homomorphism $s:W(X)\to W(Y)$ gives rise a map
$$
\widetilde s: Hom(W(Y),H) \to Hom(W(X),H).
$$
by the rule $\widetilde s (\mu)=\mu s$, which is a morphism in the
category of affine spaces $\Theta^{0}(H)$ (see
Section~\ref{sec:BasicNotions}). Let a subset $A$ from $Hom(W(X),
H)$ be given. Then $s_{*}A$ is the full pre-image of $A$ under
$\widetilde s$. 
\end{remark}

\end{example}

\begin{example}\label{ex:CategoryPhi}
{\bf Category $\widetilde \Phi$.}
Objects of the category $\widetilde \Phi$ are constructed using
extended boolean algebras $\Phi_{0}(X)$, $X\in \Gamma$. Denote by
$[\varphi (w_1, \ldots, w_{n_\varphi})]_{\tau_X}$ the image of the
element $\varphi(w_1, \ldots, w_{n_\varphi})\in M_X$ under the
homomorphism $\mathfrak L_X\to \Phi_{0}(X)=\mathfrak L_X/\tau_X$.
Let $[M_{X}]_{\tau_{X}}$ be a set of all $[\varphi (w_1, \ldots,
w_{n_\varphi})]_{\tau_X}$, $\varphi\in \Psi$.

A homomorphism of free algebras  $s:W(X)\to W(Y)$ induces the map
$ s_{*}: [M_{X}]_{\tau_{X}} \to [M_{Y}]_{\tau_{Y}},$ by the rule
$$
s_{*}\big([\varphi (w_1, \ldots, w_{n_\varphi})]_{\tau_X}\big)=
[\varphi (sw_1, \ldots, sw_{n_\varphi})]_{\tau_Y}.
$$
 This map can
be extended up to homomorphism of boolean algebras
$s_{*}:\Phi_{0}(X)\to \Phi_{0}(Y)$.

Note that morphism of a Halmos category $s_{*}$ correlates with
quantifiers under the certain rules (see \cite{Plotkin_AG},
\cite{PAP}) and they are not homomorphisms of extended boolean
algebras. Thus, the extended boolean algebras $\Phi_{0}(X)$ cannot
be an object of the category $\widetilde \Phi$. We should to add
to each $\Phi_{0}(X)$ all formulas of the form $s_{*}u$, $u\in
\Phi_{0}(X)$. Denote objects of the category $\widetilde \Phi$ by
$\Phi(X)$. So, the category $\widetilde \Phi$ is a category with
objects of the form $\Phi(X)$ and morphisms $s_{*}:\Phi(X)\to
\Phi(Y)$, $X, Y\in \Gamma$.
\end{example}


The next remark is connected with Remark~\ref{rem:s*A}.

\begin{remark}\label{rem:TildeS-T}
Let a homomorphism $s: W(X)\to W(Y)$ be given. In parallel to the
map $\widetilde s: Hom(W(Y), H)\to Hom(W(X),H)$, 
we define a map from the set of subsets in $\Phi(Y)$ to the set of
subsets in $\Phi(X)$. We denote it the same symbol $\widetilde s$
and define as
$$
{\widetilde s}T=\{ u\in \Phi(X) \mid s_{*}u\in T \},
$$
where $T\subset \Phi(Y)$. Then $\widetilde s T$ is the full
pre-image of $T$ under $s_{*}$.
\end{remark}



Halmos categories $\widetilde\Phi$ and $Hal_{\Theta}({\cal H})$
are tightly connected via homomorphism of extended boolean
algebras
$$
Val^X_{\cal H} : \Phi(X)\to Bool(W(X),H). 
$$
Intuitively, the image of a formula $u\in \Phi(X)$ under the
homomorphism $Val^X_{\cal H}$ is a value of $u$ in the algebra
$H$, i.e. $Val^X_{\cal H} u$ is a set of point in $Hom(W(X), H)$
satisfied $u$. For details see  \cite{Plotkin_AG}, \cite{Seven},
\cite{PAP}.

Let $\mu$ be a point from the affine space $Hom(W(X),H)$.

\begin{definition}
The logical kernel $LKer(\mu)$ of a point $\mu$ is the set of all
formulas $u\in\Phi(X)$ which hold true on the point $\mu$, that is
$$
LKer(\mu)=\{u\in\Phi(X) \mid  \mu\in Val^{X}_{\cal H}(u) \}
$$
\end{definition}

Note that the logical kernel $LKer(\mu)$ of a point $\mu$ is a
boolean ultrafilter (maximal filter) in the algebra $\Phi(X)$ (see
\cite{Plotkin_IsotAlg}).

\subsection{Galois correspondence}\label{sec:GaloisCor}

Now we define a correspondence between sets of formulas in the
algebra $\Phi(X)$ and subsets of points from the affine space
$Hom(W(X),H)$. 

Let $T$ be a set of formulas from $\Phi(X)$. We define a set of
points  $T^{L}_{\cal H}$ in $Hom(W(X),H)$ as
$$
 T^{L}_{\cal H}=\{ \mu :W(X)\to H \mid
T\subset LKer(\mu) \}.
$$
That is, $ T^{L}_{\cal H}$ is a set of all points $\mu\in
Hom(W(X),H)$ satisfying all formulas from $T\subset\Phi(X)$. The
set $T^{L}_{{\cal H}}$ can be written as follows:
$$
T^{L}_{\cal H}=\bigcap_{u\in T} Val^{X}_{\cal H}(u).
$$

Take a set of points $A\subset Hom(W(X,H))$ and define a set of
formulas $A^{L}_{\cal H}$ in $\Phi(X)$:
$$
A^{L}_{\cal H}=\{ u\in\Phi(X) \mid A\subset Val^{X}_{\cal H}(u)
\}.
$$
The set $ A^{L}_{\cal H}$ is the set of all formulas $u \in
\Phi(X)$ hold true at all points from $A$. One can present the set
$A^{L}_{\cal H}$ as follows:
$$
A^{L}_{\cal H}=\bigcap_{\mu\in A} LKer(\mu).
$$

The defined above correspondence between sets of formulas and sets
of points is the Galois correspondence (see \cite{MacLane}). In
the case of the Galois correspondence one can speak about Galois
closures. In particular, subsets $T^{L}_{\cal H}\subset
Hom(W(X),H)$ and $A^{L}_{\cal H}\subset \Phi(X)$ are
Galois-closed.

We call the  subset $T^{L}_{\cal H}\subset Hom(W(X),H)$
\emph{definable set} presented by the set of formulas $T$. The set
$A^{L}_{\cal H}$ is  a boolean filter in the algebra
$\Phi(X)$, as 
an intersection of boolean filters $LKer{\mu}$. It is called
\emph{$\cal H$-closed filter}.

The constructed above Galois correspondence give us a bijection
between definable sets in $Hom(W(X),H)$ and $\cal H$-closed
filters in the extended boolean algebra $\Phi(X)$.

It is known the following proposition.

\begin{prop}[\cite{Plotkin_IsotAlg}]\label{prop:InterH-closedFilt}
The intersection of $\cal H$-closed filters is an $\cal H$-closed
filter. $\qquad\square$
\end{prop}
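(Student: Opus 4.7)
The plan is to exploit the explicit description of an $\mathcal H$-closed filter given just before the statement, namely as a set of the form $A^{L}_{\mathcal H} = \bigcap_{\mu\in A} LKer(\mu)$ for some $A\subset Hom(W(X),H)$. Equivalently, a boolean filter $F$ in $\Phi(X)$ is $\mathcal H$-closed precisely when $F = (F^{L}_{\mathcal H})^{L}_{\mathcal H}$, i.e.\ when $F$ is a Galois-closed subset of $\Phi(X)$ with respect to the Galois correspondence constructed in Section~\ref{sec:GaloisCor}. The proof then consists in reading off the statement from this description.

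First, I would take an arbitrary family $\{F_i\}_{i\in I}$ of $\mathcal H$-closed filters in $\Phi(X)$ and, for each $i$, fix a set of points $A_i\subset Hom(W(X),H)$ with $F_i = A_i^{L}_{\mathcal H} = \bigcap_{\mu\in A_i} LKer(\mu)$. Next, I would compute
\[
\bigcap_{i\in I} F_i \;=\; \bigcap_{i\in I}\bigcap_{\mu\in A_i} LKer(\mu) \;=\; \bigcap_{\mu\in \bigcup_{i\in I} A_i} LKer(\mu) \;=\; \Bigl(\,\bigcup_{i\in I}A_i\Bigr)^{L}_{\mathcal H}.
\]
The right-hand side is manifestly of the form $A^{L}_{\mathcal H}$ with $A=\bigcup_i A_i \subset Hom(W(X),H)$, hence is an $\mathcal H$-closed filter by definition.

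For completeness I would also note the two minor routine checks: (i) $\bigcap_i F_i$ is itself a boolean filter in $\Phi(X)$, which is immediate since every $LKer(\mu)$ is a (ultra)filter and an arbitrary intersection of filters in a boolean algebra is again a filter; and (ii) the equality of double intersections in the display above is just Fubini for set intersection, so no real calculation is needed. I do not anticipate any genuine obstacle: the only conceptual point is recognizing that $\mathcal H$-closedness was defined exactly through a Galois correspondence, and that on the $\Phi(X)$-side this correspondence presents closed sets as intersections of the fixed family $\{LKer(\mu)\}_{\mu\in Hom(W(X),H)}$, so closure under arbitrary intersections is automatic.
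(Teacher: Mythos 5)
Your proof is correct. Note that the paper itself states this proposition without proof, citing \cite{Plotkin_IsotAlg}, so there is no internal argument to compare against; your reasoning is exactly the standard one that the definition invites: since an $\mathcal H$-closed filter is by definition a set of the form $A^{L}_{\mathcal H}=\bigcap_{\mu\in A}LKer(\mu)$, the computation $\bigcap_{i}(A_i)^{L}_{\mathcal H}=\bigl(\bigcup_{i}A_i\bigr)^{L}_{\mathcal H}$ immediately exhibits the intersection in the same form, and the routine checks you mention are indeed routine.
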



The next proposition describes one more property of $\cal
H$-closed filters (\cite{BI-90-I}, \cite{Plotkin_AG}).

\begin{prop}\label{prop:tilde-sT-closedFilt}
Let a homomorphism of free algebras  $s: W(X)\to W(Y)$ be given.
If $T$ is an $\cal H$-closed filter in $\Phi(Y)$, then
${\widetilde s }T$ is an $\cal H$-closed filter in $\Phi(X)$.
\end{prop}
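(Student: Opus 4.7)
\medskip

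\noindent\textbf{Proof proposal.} The plan is to unpack the definition of $\cal H$-closed as an intersection of logical kernels, push the preimage operator $\widetilde s$ through that intersection, and then identify the preimage of each single logical kernel as another logical kernel.

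First, I would note that since $s_{*}:\Phi(X)\to\Phi(Y)$ is a homomorphism of boolean algebras (condition 2(b) of the definition of a Halmos category), and $T$ is in particular a filter in $\Phi(Y)$, the preimage $\widetilde s T=s_{*}^{-1}(T)$ is automatically a filter in $\Phi(X)$. So the content of the proposition is really the ``$\cal H$-closedness'' of $\widetilde s T$.

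Next I would use the Galois-theoretic description of $\cal H$-closed filters recalled in Section~\ref{sec:GaloisCor}: any $\cal H$-closed filter is an intersection of logical kernels. Thus $T=\bigcap_{\nu\in B} LKer(\nu)$ for some $B\subset Hom(W(Y),H)$ (one may take $B=T^{L}_{\cal H}$). Since $\widetilde s$ is a preimage operator along $s_{*}$, it commutes with arbitrary intersections, giving
$$
\widetilde s T \;=\; \bigcap_{\nu\in B}\widetilde s\bigl(LKer(\nu)\bigr).
$$
So it suffices to show that for every $\nu\in Hom(W(Y),H)$,
$$
\widetilde s\bigl(LKer(\nu)\bigr)\;=\;LKer(\nu s),
$$
because then $\widetilde s T=\bigcap_{\nu\in B}LKer(\nu s)=(\widetilde s B)^{L}_{\cal H}$, which is $\cal H$-closed by definition.

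The key identity to verify is therefore: for $u\in\Phi(X)$ and $\nu:W(Y)\to H$,
$$
s_{*}u\in LKer(\nu) \;\Longleftrightarrow\; \nu s\in Val^{X}_{\cal H}(u),
$$
i.e.\ $\nu\in Val^{Y}_{\cal H}(s_{*}u)$ iff $\widetilde s(\nu)\in Val^{X}_{\cal H}(u)$. This is precisely the naturality of $Val^{\bullet}_{\cal H}:\widetilde\Phi\to Hal_{\Theta}({\cal H})$ with respect to the morphism $s$, combined with the explicit formula for $s_{*}$ in $Hal_{\Theta}({\cal H})$ from Example~\ref{ex:CategoryHal} and Remark~\ref{rem:s*A}, which tells us that in $Hal_{\Theta}({\cal H})$, $s_{*}A$ is exactly the full preimage $\widetilde s^{-1}(A)$ under $\widetilde s$. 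Applying $Val^{Y}_{\cal H}(s_{*}u)=s_{*}\bigl(Val^{X}_{\cal H}(u)\bigr)=\widetilde s^{-1}\bigl(Val^{X}_{\cal H}(u)\bigr)$ yields the required equivalence.

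The main obstacle, such as it is, lies in step three: one needs the compatibility of the valuation homomorphisms $Val^{X}_{\cal H}$ with the Halmos-category morphisms $s_{*}$ on both sides. Fortunately this is built into the construction recalled right before the definition of $LKer$ (and is spelled out in \cite{Plotkin_AG}, \cite{Seven}, \cite{PAP}), so it can be invoked rather than re-proved. With that ingredient in hand, the two remaining manipulations---preimage commutes with intersections, and a preimage of a filter under a boolean homomorphism is a filter---are routine.
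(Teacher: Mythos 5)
Your proof is correct. Note that the paper states Proposition~\ref{prop:tilde-sT-closedFilt} without proof, citing \cite{BI-90-I} and \cite{Plotkin_AG}, so there is no in-paper argument to compare against line by line; but your route --- writing $T=\bigcap_{\nu\in T^{L}_{\cal H}}LKer(\nu)$, pulling the preimage $\widetilde s=s_{*}^{-1}$ through the intersection, and identifying $\widetilde s\bigl(LKer(\nu)\bigr)=LKer(\nu s)$ via the compatibility $Val^{Y}_{\cal H}\circ s_{*}=s_{*}\circ Val^{X}_{\cal H}$ --- is the standard one and fits the surrounding machinery exactly. In fact your computation establishes the sharper identity $\widetilde s\bigl(A^{L}_{\cal H}\bigr)=\bigl(\widetilde s A\bigr)^{L}_{\cal H}$ with $A=T^{L}_{\cal H}$, which the paper records separately as Corollary~\ref{cor:sets+tilde-s} and Proposition~\ref{prop:Gal+tilde-s}(1), so you get those statements for free. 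The one ingredient you invoke without proof, the naturality of the valuation homomorphisms with respect to the Halmos-category morphisms $s_{*}$, is likewise only referenced (not proved) in the paper, so relying on it is legitimate; it is worth being explicit that this single identity is where all the content of the proposition resides, since the remaining steps (preimage commutes with intersections, preimage of a boolean filter under the boolean-algebra homomorphism $s_{*}$ is a filter) are routine.
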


\begin{cor}\label{cor:sets+tilde-s}
Let $A$ be a set of points in $Hom(W(Y),H)$ and $s: W(X)\to W(Y)$
be a homomorphism of free algebras. Then
$$
\widetilde s (A^{L}_{\cal H})=\big( \widetilde s A \big)^{L}_{\cal
H}.
$$
\end{cor}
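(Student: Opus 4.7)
The plan is a chain of equivalences obtained by unpacking the definitions of the Galois correspondence and the two kinds of $\widetilde s$ maps, using as the single nontrivial ingredient the functoriality of the valuation map, i.e.\ the identity
\[
 Val^{Y}_{\cal H}(s_{*}u)=s_{*}\bigl(Val^{X}_{\cal H}(u)\bigr)
\]
for every $u\in\Phi(X)$, where the $s_{*}$ on the right is the morphism of $Hal_{\Theta}({\cal H})$ from Example~\ref{ex:CategoryHal}. This identity is exactly the statement that $Val_{\cal H}^{\bullet}$ commutes with the Halmos morphisms induced by $s:W(X)\to W(Y)$, and I will take it as a property of the model $\cal H$ built in Section~\ref{sec:HalmosCat}.

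First I would translate the membership $u\in\widetilde s(A^{L}_{\cal H})$ for $u\in\Phi(X)$. By Remark~\ref{rem:TildeS-T} this is equivalent to $s_{*}u\in A^{L}_{\cal H}$, and by the definition of the Galois correspondence this in turn is equivalent to the condition that $\mu\in Val^{Y}_{\cal H}(s_{*}u)$ for every $\mu\in A$.

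Next I would apply the valuation/morphism compatibility to rewrite $Val^{Y}_{\cal H}(s_{*}u)$ as $s_{*}\bigl(Val^{X}_{\cal H}(u)\bigr)$, and then use the defining rule of the morphism $s_{*}$ in $Hal_{\Theta}({\cal H})$, namely $\mu\in s_{*}B\Leftrightarrow \mu s\in B$ (applied to $B=Val^{X}_{\cal H}(u)$). This turns the condition into: for every $\mu\in A$, the point $\mu s=\widetilde s(\mu)$ lies in $Val^{X}_{\cal H}(u)$. Equivalently, $\widetilde s A\subset Val^{X}_{\cal H}(u)$, which by the definition of the Galois correspondence on the $X$-side means precisely $u\in(\widetilde s A)^{L}_{\cal H}$.

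Reading the equivalences in both directions yields the claimed equality $\widetilde s(A^{L}_{\cal H})=(\widetilde s A)^{L}_{\cal H}$. The only step that is not purely definitional bookkeeping is the commutation $Val^{Y}_{\cal H}\circ s_{*}=s_{*}\circ Val^{X}_{\cal H}$; I do not expect a real obstacle here, since this is the standard semantic content of the functor $Val_{\cal H}:\widetilde\Phi\to Hal_{\Theta}({\cal H})$ described at the end of Section~\ref{sec:HalmosCat}, but it is the fact on which the whole argument pivots.
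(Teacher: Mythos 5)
Your argument is correct. Note first that the paper does not actually supply a proof of this corollary: it is presented as a consequence of Proposition~\ref{prop:tilde-sT-closedFilt} and is restated verbatim as part (1) of Proposition~\ref{prop:Gal+tilde-s}, with proofs deferred to the cited literature, so there is no in-paper argument to compare against. Judged on its own, your chain of equivalences
$$
u\in\widetilde s(A^{L}_{\cal H})
\Leftrightarrow s_{*}u\in A^{L}_{\cal H}
\Leftrightarrow A\subset Val^{Y}_{\cal H}(s_{*}u)
\Leftrightarrow \widetilde s A\subset Val^{X}_{\cal H}(u)
\Leftrightarrow u\in\big(\widetilde s A\big)^{L}_{\cal H}
$$
is exactly the standard derivation, and every link is justified by the definitions in Remark~\ref{rem:TildeS-T} and Section~\ref{sec:GaloisCor}. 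The one fact you rightly isolate as non-definitional, $Val^{Y}_{\cal H}(s_{*}u)=s_{*}Val^{X}_{\cal H}(u)$, is legitimate: it is precisely the singleton case $T=\{u\}$ of Proposition~\ref{prop:Gal+s*}(1), since $\{u\}^{L}_{\cal H}=Val^{X}_{\cal H}(u)$ and $(s_{*}\{u\})^{L}_{\cal H}=Val^{Y}_{\cal H}(s_{*}u)$, and it expresses that the value homomorphism commutes with the Halmos morphisms induced by $s$. The only caveat is one of ordering: Proposition~\ref{prop:Gal+s*} is stated \emph{after} the corollary in the text, so to stay strictly within the paper's sequence you should attribute this commutation to the construction of $Val_{\cal H}$ in Section~\ref{sec:HalmosCat} (or to the references \cite{BI-90-I}, \cite{Plotkin_AG}) rather than to that proposition; either way the content is the same and your proof stands.
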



The next proposition describes the relation between the Galois
correspondence and morphisms in the categories $Hal_\Theta({\cal
H})$ and $\widetilde \Phi$ (see \cite{BI-90-I},
\cite{Plotkin_AG}).

\begin{prop}\label{prop:Gal+s*}
Let $T$ be a set of formulas from $\Phi(X)$, $A$ be a set of
points in $Hom(W(X),H)$ and $s: W(X)\to W(Y)$ be a homomorphism of
free algebras. Then
\begin{enumerate}
\item %
$(s_*T)^{L}_{{\cal H}}=s_{*} T^{L}_{\cal H}$,

\item %
$s_{*} A^{L}_{\cal H}\subseteq (s_*A)^{L}_{{\cal H}}$.

\end{enumerate}
\end{prop}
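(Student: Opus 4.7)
The plan is to derive both parts from a single naturality property of the value map, namely
$$
Val^Y_{\cal H}(s_* u) \;=\; \widetilde s^{-1}\bigl(Val^X_{\cal H}(u)\bigr), \qquad u\in \Phi(X),
$$
which reflects the fact that $Val$ is a morphism between the Halmos categories $\widetilde\Phi$ and $Hal_\Theta({\cal H})$. Combining this with Remark~\ref{rem:s*A} (identifying $\widetilde s^{-1}$ with $s_{*}$ on the affine-space side), one rewrites the naturality as $Val^Y_{\cal H}(s_* u) = s_{*}\, Val^X_{\cal H}(u)$. Once this is in hand, both claims reduce to unraveling definitions.

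For item~(1) I chase the equivalences
$$
\mu \in (s_* T)^L_{\cal H}\ \iff\ \forall u\in T,\ \mu \in Val^Y_{\cal H}(s_* u) \ \iff\ \forall u\in T,\ \mu s \in Val^X_{\cal H}(u) \ \iff\ \widetilde s(\mu) \in T^L_{\cal H},
$$
using the naturality relation at the second step. The last condition means $\mu \in \widetilde s^{-1}(T^L_{\cal H})$, which by Remark~\ref{rem:s*A} is exactly $\mu \in s_{*}\, T^L_{\cal H}$. Hence the two sets coincide.

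For item~(2), pick $v \in s_{*} A^L_{\cal H}$ and write $v = s_* u$ with $u\in A^L_{\cal H}$; we must verify $s_{*} A \subset Val^Y_{\cal H}(s_* u)$. For any $\mu \in s_{*} A = \widetilde s^{-1}(A)$, one has $\mu s\in A$, so $u \in A^L_{\cal H}$ forces $\mu s \in Val^X_{\cal H}(u)$; by the naturality property $\mu \in \widetilde s^{-1}(Val^X_{\cal H}(u)) = Val^Y_{\cal H}(s_* u)$, as required. Thus $s_* u \in (s_{*} A)^L_{\cal H}$. Equality generally fails here since a formula $v \in (s_{*} A)^L_{\cal H}$ need not have the form $s_* u$ with $u \in A^L_{\cal H}$; only elements of $s_{*} A^L_{\cal H}$ of that prescribed form are controlled.

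The only real content is the naturality relation $Val^Y_{\cal H} \circ s_* = s_{*} \circ Val^X_{\cal H}$; this is the main step to verify, and it can also be cited from \cite{Plotkin_AG}, \cite{BI-90-I}. It is established by induction on the construction of formulas in $\Phi(X)$: on the generators $\varphi(w_1,\dots,w_{n_\varphi})$ it holds by the very definition of the constants of the extended boolean algebra and of their $Val$-images in $Bool(W(-),H)$; boolean connectives are transported by $\widetilde s^{-1}$ because $\widetilde s^{-1}$ is a boolean homomorphism; and the compatibility of $s_*$ with the existential quantifiers is exactly the special identity built into the axioms of a Halmos category (item~3 of the definition). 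With that verification in place, the rest of the argument above is a direct diagram chase.
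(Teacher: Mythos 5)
The paper does not actually prove Proposition~\ref{prop:Gal+s*}; it is stated with a pointer to \cite{BI-90-I} and \cite{Plotkin_AG}, so there is no in-paper argument to compare against. Your proof is correct and is the standard one from those sources: you correctly isolate the single nontrivial ingredient, the naturality $Val^Y_{\cal H}(s_*u)=\widetilde s^{-1}(Val^X_{\cal H}(u))=s_*\,Val^X_{\cal H}(u)$ (i.e.\ that the family $Val^X_{\cal H}$ is a morphism of Halmos categories from $\widetilde\Phi$ to $Hal_\Theta({\cal H})$), and both items then follow by the definition-chases you give, with the set-of-points $s_*$ read as the full preimage under $\widetilde s$ per Remark~\ref{rem:s*A} and the set-of-formulas $s_*$ read as the direct image. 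Your explanation of why item~(2) is only an inclusion is also apt. The one caveat is that the inductive verification of the naturality relation you sketch is really a property built into the construction of $\Phi(X)$ and of $Val^X_{\cal H}$ (recall that the objects of $\widetilde\Phi$ are enlarged by formulas $s_*u$ precisely so that $s_*$ need only be a boolean, not an extended-boolean, homomorphism); so it is cleaner to cite it outright from \cite{Plotkin_AG} or \cite{BI-90-I}, as you note, rather than to re-derive it. With that relation granted, your argument is complete.
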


\begin{cor}\label{cor:def+s*}
If $A$ is a definable set in $Hom(W(X),H)$ then $s_*A$ is also a
definable set.
\end{cor}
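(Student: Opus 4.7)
The plan is to invoke Proposition \ref{prop:Gal+s*}(1) essentially verbatim. By the definition of a definable set, $A \subset Hom(W(X),H)$ being definable means there exists a set of formulas $T \subset \Phi(X)$ such that $A = T^{L}_{\cal H}$. The natural candidate for a set of formulas presenting $s_*A$ is then the image $s_*T \subset \Phi(Y)$, which makes sense since $s_* : \Phi(X) \to \Phi(Y)$ is a homomorphism of (extended) boolean algebras as part of the Halmos category structure of $\widetilde\Phi$ (Example \ref{ex:CategoryPhi}).

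With this candidate, the whole argument collapses to one chain of equalities. Starting from $s_*A$ and using $A = T^{L}_{\cal H}$, I would write
$$
s_* A \;=\; s_* T^{L}_{\cal H} \;=\; (s_* T)^{L}_{\cal H},
$$
where the first equality is just the substitution $A = T^L_{\cal H}$, and the second is exactly the first identity of Proposition \ref{prop:Gal+s*}. This exhibits $s_*A$ as the definable subset of $Hom(W(Y),H)$ presented by the formula set $s_*T \subset \Phi(Y)$, finishing the proof.

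There is no genuine obstacle: the corollary is a direct and intended consequence of Proposition \ref{prop:Gal+s*}(1), not an independent result requiring new ideas. The only minor point worth flagging, for the sake of precision, is to keep the two meanings of the symbol $s_*$ apart: on the left it is the morphism $Hal^X_\Theta({\cal H}) \to Hal^Y_\Theta({\cal H})$ acting on subsets of affine spaces, while on the right it is the morphism $\Phi(X) \to \Phi(Y)$ (extended pointwise to subsets of $\Phi(X)$) acting on sets of formulas. The compatibility between these two $s_*$'s under the Galois correspondence is precisely what Proposition \ref{prop:Gal+s*}(1) guarantees, so once that proposition is in hand the corollary is immediate.
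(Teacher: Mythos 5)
Your proof is correct and is precisely the intended derivation: the paper states this as an immediate corollary of Proposition~\ref{prop:Gal+s*}(1), and writing $A=T^{L}_{\cal H}$ and computing $s_*A=s_*T^{L}_{\cal H}=(s_*T)^{L}_{\cal H}$ is exactly the argument. Your remark distinguishing the two uses of $s_*$ (on subsets of affine spaces versus on sets of formulas) is a sensible precaution but introduces nothing beyond the paper's own setup.
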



The similar relation takes place between definable sets, $\cal
H$-closed filters and  maps $\widetilde s$ (\cite{BI-90-I},
\cite{Plotkin_AG}, \cite{Seven}).

\begin{prop}\label{prop:Gal+tilde-s}
 Let $T$ be a set of formulas from $\Phi(Y)$, $A$ be a set of points in
$Hom(W(Y),H)$. Let a homomorphism of free algebras $s: W(X)\to
W(Y)$ be given. Then
\begin{enumerate}
\item %
$\widetilde s ( A^{L}_{\cal H}) =\big( \widetilde s A
\big)^{L}_{\cal H}$,

\item %
$\widetilde s (T^{L}_{\cal H})\subseteq \big(\widetilde s T
\big)^{L}_{\cal H}$.

\end{enumerate}
\end{prop}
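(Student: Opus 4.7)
The plan is to reduce both parts to the functoriality identity
\[
Val^Y_{\cal H}(s_* u) \;=\; s_*\bigl(Val^X_{\cal H}(u)\bigr) \;=\; \widetilde s^{\,-1}\bigl(Val^X_{\cal H}(u)\bigr),
\]
which by the definition of $s_*$ in Example~\ref{ex:CategoryHal} just says $\mu \in Val^Y_{\cal H}(s_* u) \Leftrightarrow \mu s \in Val^X_{\cal H}(u)$ for every $\mu : W(Y) \to H$. This is the singleton case of Proposition~\ref{prop:Gal+s*}(1), and it is the only nontrivial input I need; the rest is bookkeeping between the two instances of the Galois correspondence together with the definitions of $\widetilde s$ on points (Remark~\ref{rem:s*A}) and on subsets of formulas (Remark~\ref{rem:TildeS-T}).

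For part~(1), I would expand the left-hand side directly: $u \in \widetilde s(A^L_{\cal H})$ iff $s_* u \in A^L_{\cal H}$ (by Remark~\ref{rem:TildeS-T}), iff every $\mu \in A$ satisfies $\mu \in Val^Y_{\cal H}(s_* u)$, iff every $\mu \in A$ satisfies $\mu s \in Val^X_{\cal H}(u)$ (by the functoriality identity above), iff $\widetilde s A \subseteq Val^X_{\cal H}(u)$, iff $u \in (\widetilde s A)^L_{\cal H}$. Every step is a biconditional, so the equality follows.

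For part~(2), take an arbitrary $\nu \in \widetilde s(T^L_{\cal H})$, so $\nu = \mu s$ for some $\mu \in T^L_{\cal H}$. Fix any $u \in \widetilde s T$, which by Remark~\ref{rem:TildeS-T} means $s_* u \in T$. The assumption $\mu \in T^L_{\cal H}$ then gives $\mu \in Val^Y_{\cal H}(s_* u)$, and the functoriality identity forces $\nu = \mu s \in Val^X_{\cal H}(u)$. Hence $\nu \in (\widetilde s T)^L_{\cal H}$, establishing the inclusion.

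I do not expect a real obstacle. The only conceptual point worth flagging is why part~(2) is stated only as an inclusion: a formula lying in $(\widetilde s T)^L_{\cal H}$ need not be of the form $s_* u$ with $s_* u \in T$, so a point $\nu \in (\widetilde s T)^L_{\cal H}$ may fail to admit a lift $\mu \in Hom(W(Y),H)$ with $\mu \in T^L_{\cal H}$ and $\mu s = \nu$. This reflects the typical non-surjectivity of $\widetilde s$ and is why the reverse inclusion fails in general.
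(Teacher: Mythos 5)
Your proof is correct, and in fact the paper gives no proof of this proposition at all (it is quoted from \cite{BI-90-I}, \cite{Plotkin_AG}, \cite{Seven}); your unwinding of the definitions via the single identity $\mu \in Val^{Y}_{\cal H}(s_{*}u) \Leftrightarrow \mu s \in Val^{X}_{\cal H}(u)$ is the standard argument and uses only facts available in the paper (Remarks~\ref{rem:s*A} and \ref{rem:TildeS-T}, and the singleton case of Proposition~\ref{prop:Gal+s*}(1)). Your closing remark on why part~(2) is only an inclusion is also accurate.
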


\section{Knowledge base model}\label{sec:KB-model}

In this section we will introduce the concept of a \emph{knowledge
base model} under consideration. But let us start with discussion
what is \emph{knowledge}.

\subsection{What is knowledge?}\label{sec:What-is-knowledge}


Although \emph{knowledge} is one of the most familiar concept, the
fundamental question about it: \emph{``What is it?"}.
Rarely this question has  been answered directly. Numerous papers
introduce one or another definition of knowledge, depending on
needs of a particular research and field of interest (see
\cite{Alavi-Leidner-Rew}, \cite{DavenportPrusak},
\cite{DavisShrobeSzolovits93},
\cite{FrawleyPiatetsky-ShapiroMatheus}, \cite{Helbig},
\cite{McInerney}, etc.)

Speaking about  knowledge we proceed from its representation in
three components.
\begin{enumerate}
\item %
\emph{Subject area of  knowledge},

\item %
\emph{Description of knowledge},

\item %
\emph{Content of  knowledge}.
\end{enumerate}

Let us  describe these component in more details.

\emph{Subject area of  knowledge} is presented by a model ${\cal
H}=(H,\Psi,f)$, where
\begin{itemize}
\item %
 $H$ is an algebra in fixed variety of algebras $\Theta$.

\item %
 $\Psi$ is a set of relation symbols $\varphi$.

\item %
 $f$ is an interpretation of each symbol $\varphi$ in $H$.
\end{itemize}

\emph{Description of knowledge} presents a syntactical component
of  knowledge. From algebraic viewpoint description of knowledge
is a set of formulas $T$, more precise, it is an $\cal H$-filter
in the algebra of formulas $\Phi(X)$, $X=\{x_1, \ldots , x_n\}$.

\emph{Content of  knowledge} is a subset in $H^{n}$, where $H^{n}$
is the Cartesian power of $H$. Each content of a knowledge $A$
corresponds to the description of a knowledge $T\subset\Phi(X)$,
$|X|=n$. If we regard $H^{n}$ as an affine space then  this
correspondence can be treated geometrically via Galois
correspondence. 

In order to describe the dynamic nature of a knowledge base two
categories and a functor are introduced: \emph{the category  of
knowledge description $F_{\Theta}(\cal H)$}, \emph{the category of
knowledge content $D_\Theta(\cal H)$} and \emph{the knowledge
functor $Ct_{\cal H}$}.

An object $F_\Theta^{X}({\cal H})$ of the category of knowledge
description $F_\Theta({\cal H})$ is the lattice of all $\cal
H$-closed filters in the algebra $\Phi(X)$, $X\in \Gamma$.

\begin{remark}
We cannot say that the usual set-theoretical union of ${\cal
H}$-closed filters is an ${\cal H}$-closed filter. To constitute a
lattice of ${\cal H}$-closed filters in $\Phi(X)$ there was
introduced a new operation
$$
T_1\overline \cup T_2=(T_1\cup T_2)^{LL}_{\cal H}.
$$
Then all ${\cal H}$-closed filters in $\Phi(X)$ form a lattice
with the operation $\overline \cup$ and $\cap$ (for details see
\cite{BI-90-I}, \cite{Plotkin_AG}, \cite{Plotkin_IsotAlg}).
\end{remark}

\subsection{Category of knowledge description $F_\Theta({\cal H})$}\label{sec:CatKnDesc}

An object $F_\Theta^{X}({\cal H})$ of the category $F_\Theta({\cal
H})$ is the lattice of all $\cal H$-closed filters in the algebra
$\Phi(X)$, $X\in \Gamma$.

Let a homomorphism $s:W(X)\to W(Y)$ and $\cal H$-closed filters
$T_1\in \Phi(X)$ and $T_{2}\in \Phi(Y)$ be given. We say that a
map $[s_{*}]: T_{1}\to T_{2}$ is \emph{admissible},  if
$s_{*}T_1\subseteq T_2$. Remind that $s_{*}$ is a map between
$\cal H$-closed filters in $\Phi(X)$ and $\Phi(Y)$ induced by the
corresponding morphism of the category $\widetilde \Phi$  (see
Section~\ref{sec:HalmosCat}).

A morphism between objects $F_\Theta^X({\cal H})$ and
$F_\Theta^Y({\cal H}) $
 $$
[s_{*}]: F_\Theta^X({\cal H}) \to F_\Theta^Y({\cal H})
 $$
is defined, if $[s_{*}]: T_1 \to T_2$ is admissible for every
$T_1\in F_\Theta^X({\cal H})$. 

We define a composition of morphisms $[s^{1}_{*}]:
F_\Theta^X({\cal H}) \to F_\Theta^Y({\cal H})$ and $[s^{2}_{*}]:
F_\Theta^Y({\cal H}) \to F_\Theta^Z({\cal H})$ as follows
$$
[s^{2}_{*}]\circ [s^{1}_{*}]=[s^{2}_{*}s^{1}_{*}].
$$
This definition is correct. Indeed, if $[s^{1}_{*}]: T_1 \to T_2$
and $[s^{2}_{*}]: T_2 \to T_3$, then $s^{1}_{*}T_1\subseteq T_2$
and $s^{2}_{*}T_2\subseteq T_3$. This means that
$s^{2}_{*}s^{1}_{*}T_1\subseteq T_3$ and $[s^{2}_{*}s^{1}_{*}]$ is
admissible for $T_1$ and $T_3$.

\subsection{Category of knowledge content $D_\Theta({\cal H})$}\label{sec:CatKnCont}

An object $D_\Theta^{X}({\cal H})$ of the category $D_\Theta({\cal
H})$ is the lattice of all definable sets in the affine space
$Hom(W(X),H)$, $X\in \Gamma$.

Let a homomorphism $s:W(X)\to W(Y)$ and definable sets $A_2\in
Hom(W(Y), H)$ and $A_{1}\in Hom(W(X), H)$ be given. We say that a
map $[\widetilde s]: A_{2}\to A_{1}$ is \emph{admissible},  if
$\widetilde s A_2\subseteq A_1$. Remind that $\widetilde s$  is a
map between definable sets in $Hom(W(Y),H)$ and $Hom(W(X),H)$
induced by the corresponding morphism of the category of affine
spaces $\Theta^{0}(H)$ (see Section~\ref{sec:BasicNotions}).

A morphism between objects $D_\Theta^Y({\cal H})$ and
$D_\Theta^X({\cal H}) $
 $$
[\widetilde s]: D_\Theta^Y({\cal H}) \to D_\Theta^X({\cal H})
 $$
is defined, if $[\widetilde s]: A_2 \to A_1$ is admissible for
every $A_2\in D_\Theta^Y({\cal H})$.

We define a composition of morphisms $[\widetilde{s^{1}}]:
F_\Theta^Z({\cal H}) \to D_\Theta^Y({\cal H})$ and
$[\widetilde{s^{2}}]: D_\Theta^Y({\cal H}) \to D_\Theta^X({\cal
H})$ as follows
$$
[\widetilde{s^{2}}]\circ [\widetilde{s^{1}}]=[\widetilde{s^{2}}
\widetilde{s^{1}}].
$$
This definition is correct. Indeed, if $[\widetilde{s^{1}}]: A_3
\to A_2$ and $[\widetilde{s^{2}}]: A_2 \to A_1$, then
$\widetilde{s^{1}}A_3\subseteq A_2$ and $\widetilde{s^{2}}
A_2\subseteq A_1$. This means that $\widetilde{s^{2}}
\widetilde{s^{1}}A_3\subseteq A_1$ and $[\widetilde{s^{2}}
\widetilde{s^{1}}]$ is admissible for $A_3$ and $A_1$.

\subsection{The knowledge functor $Ct_{\cal
H}$}\label{sec:KnFunctor}

The category of knowledge description $F_{\Theta}(\cal H)$ and the
category of knowledge content $D_{\Theta}(\cal H)$ are related by
the knowledge functor (for details see \cite{BI-90-I}).
$$
Ct_{\cal H}: F_{\Theta}({\cal H})\to D_{\Theta}({\cal H}),
$$
which is defined on objects by
$$
Ct_{\cal H}(F^{X}_{\Theta}({\cal H}))=D^{X}_{\Theta}({\cal H}),
$$
and on  morphisms by
$$
Ct_{\cal H}([s_{*}])=[\widetilde s],
$$
where $s:W(X)\to W(Y)$ is a given homomorphism of free algebras.
Moreover, if
 $[s_{*}]: F^{X}_{\Theta}({\cal H})\to F^{Y}_{\Theta}({\cal
H}) $ is a morphism in $F_{\Theta}({\cal H})$, such that
$$
[s_{*}]:T_1\to T_2,
$$
then $[\widetilde s]: D^{Y}_{\Theta}({\cal H}) \to
D^{X}_{\Theta}({\cal H})$ is  a morphism in $D_{\Theta}({\cal H})$
defined by the rule
$$
[\widetilde{s}]: (T_2)^{L}_{\cal H} \to (T_{1})^{L}_{\cal H}.
$$

Now we are at the point to give a definition of  knowledge base
model. Let a model ${\cal H}=(H,\Psi,f)$ be given.

{\sloppy
\begin{defin}\label{def:KB}
A knowledge base $KB = KB(H,\Psi,f)$ is a triple $\big(
F_\Theta({\cal H}), D_\Theta({\cal H}),Ct_{\cal H} \big)$, where
$F_\Theta({\cal H})$ is the category of knowledge description,
$D_\Theta({\cal H})$ is the category of knowledge content, and
$$
Ct_{\cal H}: F_\Theta({\cal H}) \to D_\Theta({\cal H})
$$
is the contravariant functor.
\end{defin}

}

\begin{remark}
We will use the term ``a knowledge base'' instead of a more
precise ``a knowledge base model''.
\end{remark}


One can say that defined knowledge base model is a sort of
automaton (see \cite{P-DB}), where queries are objects of the
category of knowledge descriptions  $F_\Theta(\cal H)$, replies
are objects of the category of knowledge content $D_\Theta(\cal
H)$. To be such automaton a knowledge base also presupposes a
connection with a particular data (information). This information
is held in the subject area presented by the model ${\cal
H}=(H,\Psi,f)$.

The knowledge  functor $Ct_{\cal H}$ gives a dynamical passage
between  queries and replies, namely,  between categories
$F_\Theta(\cal H)$ and $D_\Theta(\cal H)$.
Moreover, 
this passage is one-to-one correspondence.

\begin{theorem}[\cite{BI-90-II}]\label{th:dual:CatF-CatD}
The knowledge functor $Ct_{\cal H}$ gives rise to the dual
isomorphism between the category of knowledge description
$F_{\Theta}({\cal H})$ and the category of knowledge content
$D_\Theta({\cal H})$. 
\end{theorem}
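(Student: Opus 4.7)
The plan is to construct $Ct_{\cal H}$ as a dual isomorphism by leveraging the Galois correspondence of Section~\ref{sec:GaloisCor}, and to verify its bijectivity on both objects and morphisms. The proof proceeds in three stages.

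First, for the object level, I would use the Galois correspondence to establish the bijection. For each $X\in\Gamma$, the assignments $T\mapsto T^L_{\cal H}$ and $A\mapsto A^L_{\cal H}$ are mutually inverse on Galois-closed objects. Since objects of $F_\Theta^X({\cal H})$ are precisely the $\cal H$-closed filters in $\Phi(X)$ and objects of $D_\Theta^X({\cal H})$ are precisely the definable sets in $Hom(W(X),H)$, these are inverse antitone bijections of lattices (with respect to $\overline\cup,\cap$ on one side and their Galois duals on the other). Aggregating over $X$ yields the required bijection between the object classes of the two categories.

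Second, for the morphism level, the central claim to verify is that for a fixed homomorphism $s:W(X)\to W(Y)$ and $\cal H$-closed filters $T_1\subset\Phi(X)$, $T_2\subset\Phi(Y)$, the admissibility $s_*T_1\subseteq T_2$ holds in $F_\Theta({\cal H})$ if and only if the admissibility $\widetilde s\, T_2^L\subseteq T_1^L$ holds in $D_\Theta({\cal H})$. The forward direction starts from $s_*T_1\subseteq T_2$, rewrites this as $T_1\subseteq\widetilde s\, T_2$ via the preimage definition of $\widetilde s$ on formulas (Remark~\ref{rem:TildeS-T}), and applies $L$ to deduce $(\widetilde s\, T_2)^L\subseteq T_1^L$. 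Then applying Proposition~\ref{prop:Gal+tilde-s}(1) with $A=T_2^L$ gives $\widetilde s\, T_2=\widetilde s\,(T_2^L)^L=(\widetilde s\, T_2^L)^L$ (using $T_2^{LL}=T_2$); taking $L$ of this identity and using that $\widetilde s\, T_2^L$ is a definable set yields $(\widetilde s\, T_2)^L=\widetilde s\, T_2^L$, closing the chain. The reverse direction applies $L$ to $\widetilde s\, T_2^L\subseteq T_1^L$ and uses $T_1^{LL}=T_1$ together with the same identity $(\widetilde s\, T_2^L)^L=\widetilde s\, T_2$ to recover $T_1\subseteq\widetilde s\, T_2$, i.e., $s_*T_1\subseteq T_2$.

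Third, contravariant functoriality and compatibility with composition follow from the fact that $s\mapsto s_*$ is a covariant functor on the underlying Halmos categories, while $s\mapsto\widetilde s$ reverses composition: for $s^1:W(X)\to W(Y)$ and $s^2:W(Y)\to W(Z)$, one has $\widetilde{s^2 s^1}=\widetilde{s^1}\,\widetilde{s^2}$, so $Ct_{\cal H}([s^2_*s^1_*])=[\widetilde{s^2 s^1}]=[\widetilde{s^1}\,\widetilde{s^2}]=Ct_{\cal H}([s^1_*])\circ Ct_{\cal H}([s^2_*])$ in $D_\Theta({\cal H})$. Identities are preserved since $\widetilde{\mathrm{id}}=\mathrm{id}$. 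Together with the bijections on objects and on admissible morphisms (parameterized by the common homomorphism $s$), this shows $Ct_{\cal H}$ is a dual isomorphism.

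I expect the main obstacle to be the careful bookkeeping in the morphism step: the symbol $\widetilde s$ acts both on subsets of formulas (as preimage under $s_*$) and on subsets of points (as image under the map $\mu\mapsto\mu s$), and the identity $(\widetilde s\, T_2)^L=\widetilde s\, T_2^L$ mixes these two usages via the Galois operator. Once this identity is clearly extracted from Proposition~\ref{prop:Gal+tilde-s}(1) and the closedness of $T_1$, $T_2$, and $\widetilde s\, T_2^L$, both directions of the morphism equivalence become symmetric, and the remaining functorial verifications are routine.
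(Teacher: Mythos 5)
The paper does not actually prove Theorem~\ref{th:dual:CatF-CatD}; it imports it from \cite{BI-90-II}, so there is no in-text argument to compare against. Your strategy is nonetheless the natural one and is essentially correct: the Galois bijection between $\cal H$-closed filters and definable sets handles objects elementwise, the admissibility equivalence $s_*T_1\subseteq T_2 \Leftrightarrow \widetilde s\,T_2^L\subseteq T_1^L$ handles morphisms, and $\widetilde{s^2s^1}=\widetilde{s^1}\,\widetilde{s^2}$ gives contravariance. One step, however, is wrong as stated: you claim $(\widetilde s\,T_2)^L=\widetilde s\,T_2^L$ ``using that $\widetilde s\,T_2^L$ is a definable set.'' Nothing in the paper supports that the \emph{image} of a definable set under $\widetilde s$ is definable --- Corollary~\ref{cor:def+s*} covers only preimages, and Proposition~\ref{prop:Gal+tilde-s}(2) is deliberately stated as an inclusion $\widetilde s(T^L_{\cal H})\subseteq(\widetilde s\,T)^L_{\cal H}$ precisely because equality (i.e.\ Galois-closedness of the image) can fail, e.g.\ for projections of sets cut out by infinitely many formulas. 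Fortunately your argument does not need the equality: in the forward direction the chain $\widetilde s\,T_2^L\subseteq(\widetilde s\,T_2)^L\subseteq T_1^L$ already follows from Proposition~\ref{prop:Gal+tilde-s}(2) together with $T_1\subseteq\widetilde s\,T_2$; and the reverse direction uses only the correct identity $(\widetilde s\,T_2^L)^L=\widetilde s\,T_2^{LL}=\widetilde s\,T_2$ from Proposition~\ref{prop:Gal+tilde-s}(1). With that step restated, the proof is sound.
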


\section{Knowledge bases equivalences}\label{sec:KB-equiv}


\subsection{An overview}\label{sec:KB-overview}

In this section we give a short review of our previous results
about various knowledge bases equivalences and connections between
them. We start with the most strong equivalence from algebraic
viewpoint, namely, with \emph{isomorphic knowledge bases}.

Fix a variety of algebras $\Theta$, algebras $H_1$ and $H_2$ from
$\Theta$ and a set of relation symbols $\Psi$.  Let two models
${\cal H}_1=(H_1,\Psi, f_1)$ and ${\cal H}_2=(H_2,\Psi, f_2)$   be
given.

\begin{defin}\label{def:Isom-KB}
Two knowledge bases $KB({\cal H}_1)$ and $KB({\cal H}_2)$ are
called  isomorphic if the corresponding models ${\cal H}_1$ and
${\cal H}_2$ are isomorphic.
\end{defin}

The notion of isomorphic knowledge bases is very strong. It
presuppose an isomorphism of subject areas of knowledge bases,
which automatically implies an isomorphism of categories of
knowledge description of corresponding knowledge bases and,
according to Theorem~\ref{th:dual:CatF-CatD}, an isomorphism of
categories of knowledge content of corresponding knowledge bases.

For practical needs there is more appropriate and not too strong
notion of \emph{informationally equivalent knowledge bases}.


Let two models ${\cal H}_1=(H_1,\Psi,f_1)$ and ${\cal
H}_2=(H_2,\Psi,f_2)$ be given. Take  the corresponding knowledge
bases $KB({\cal H}_1)$ and $KB({\cal H}_2)$.

\begin{defin}\label{def:InformEquivKB}
Knowledge base $KB({\cal H}_1)$ and $KB({\cal H}_2)$ are called
informationally equivalent, if the categories of knowledge
description $F_\Theta({\cal H}_1)$ and $F_\Theta({\cal H}_2)$ are
isomorphic.
\end{defin}

\begin{remark}\label{rem:Def-InformEquiv-KB}
In view of Theorem~\ref{th:dual:CatF-CatD}, the categories of
knowledge description $F_\Theta({\cal H}_1)$ and $F_\Theta({\cal
H}_2)$ are isomorphic if and only if the categories of knowledge
content $D_\Theta({\cal H}_1)$ and $D_\Theta({\cal H}_2)$ are
isomorphism. Thus, one can formulate
Definition~\ref{def:InformEquivKB} in terms of isomorphism of the
categories of knowledge content.
\end{remark}

In plain words, the informational equivalence of knowledge bases
means that everything that can be asked from one knowledge base
can be asked from the other and to conclude that this is the same
information.

Our main interest is to find
out how the informational equivalence is related to the logical
description of knowledge bases. In this concern, there were
defined  \emph{elementarily equivalent, logically-geometrical
equivalent, $LG$-isotypic  knowledge bases and others}.

The notion of $LG$-equivalence (logically-geometrical equivalence)
of knowledge bases is based on geometrical approach, whereas
$LG$-isotypic knowledge bases are defined using logical tools. But
these notions
give us the same description of knowledge bases: 

\begin{theorem}[\cite{APP}]\label{th:LG-equiv-then-InformEquiv}
Logically-geometrical equivalent (or $LG$-isotypic) knowledge
bases are informationally equivalent.
\end{theorem}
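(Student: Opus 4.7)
The plan is to unwind both equivalence notions and show that each gives an isomorphism of the categories of knowledge description $F_\Theta({\cal H}_1)$ and $F_\Theta({\cal H}_2)$, which by Remark~\ref{rem:Def-InformEquiv-KB} is exactly informational equivalence. I would attack the $LG$-isotypic case first, since it is formulated logically and so maps most directly onto the structure of $F_\Theta({\cal H})$; the geometric $LG$-equivalence is then handled either by reducing it to the isotypic form or by running the same argument on $D_\Theta({\cal H})$ and transferring back through the duality of Theorem~\ref{th:dual:CatF-CatD}.

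Working in the isotypic setting, the hypothesis should say that for every $X\in\Gamma$ the families of realized logical kernels
$\{LKer(\mu_1)\mid \mu_1:W(X)\to H_1\}$ and $\{LKer(\mu_2)\mid \mu_2:W(X)\to H_2\}$
coincide inside $\Phi(X)$. The key observation is that by the formula $A^{L}_{\cal H}=\bigcap_{\mu\in A}LKer(\mu)$ together with Proposition~\ref{prop:InterH-closedFilt}, every $\cal H$-closed filter is precisely an intersection of logical kernels; so equal families of kernels produce literally equal families of $\cal H$-closed filters, giving a bijection of objects $F^{X}_\Theta({\cal H}_1)=F^{X}_\Theta({\cal H}_2)$. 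The lattice operations $\cap$ and $\ol\cup$ are defined through set intersection and through the double-Galois closure $(T_1\cup T_2)^{LL}_{\cal H}$; both are determined by the common family of kernels, so the lattice structures agree. For morphisms $[s_{*}]:F^{X}_\Theta\to F^{Y}_\Theta$, admissibility is the condition $s_{*}T_1\subseteq T_2$ inside the model-free Halmos category $\wt\Phi$, so the morphism sets match and composition is preserved; the identity map is then the required categorical isomorphism.

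For the geometric $LG$-equivalence, I would translate the hypothesis into a lattice isomorphism $D^{X}_\Theta({\cal H}_1)\cong D^{X}_\Theta({\cal H}_2)$ commuting with every $\wt s$ (using Proposition~\ref{prop:Gal+tilde-s} and Corollary~\ref{cor:sets+tilde-s} to see that $\wt s$ indeed acts on definable sets), assemble this into an isomorphism $D_\Theta({\cal H}_1)\cong D_\Theta({\cal H}_2)$, and dualize via $Ct_{\cal H}$ and Theorem~\ref{th:dual:CatF-CatD}. Alternatively, by tracing Galois closures of singletons back to logical kernels one extracts from an $LG$-equivalence the $LG$-isotypy hypothesis and reuses the first argument; the remark that ``these notions give us the same description of knowledge bases'' suggests this reduction is available.

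The main obstacle I expect is bureaucratic rather than conceptual, concerning the structure of morphisms: one needs to check that $s_{*}$ of an $\cal H$-closed filter remains $\cal H$-closed on both sides (not immediate from Proposition~\ref{prop:Gal+s*}, since that proposition transports only the $\!\phantom{|}^{L}$-closure), and that the join $\ol\cup$, which goes through a double-Galois closure, is computed with respect to compatible families of kernels. Once these compatibility checks pass, the identification at each level $X$ assembles functorially across the morphisms of $\Theta^{0}$, yielding the isomorphism of $F_\Theta({\cal H}_1)$ and $F_\Theta({\cal H}_2)$ required by Definition~\ref{def:InformEquivKB}.
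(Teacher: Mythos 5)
The paper does not prove Theorem~\ref{th:LG-equiv-then-InformEquiv}: it is imported verbatim from \cite{APP}, and neither the definition of $LG$-equivalence nor that of $LG$-isotypy appears in the text, so there is no in-paper argument to compare yours against. Judged on its own terms, your reconstruction of the isotypic case is sound and matches the standard route. Taking $LG$-isotypy to mean that the families $\{LKer(\mu_i)\mid \mu_i:W(X)\to H_i\}$ coincide for every $X$, the identity $A^{L}_{\cal H}=\bigcap_{\mu\in A}LKer(\mu)$ shows that the ${\cal H}_i$-closed filters are exactly the intersections of subfamilies of realized kernels, so the object lattices $F^{X}_\Theta({\cal H}_1)$ and $F^{X}_\Theta({\cal H}_2)$ are literally equal (the join $(T_1\cup T_2)^{LL}_{\cal H}$ is likewise determined by the kernel family, since $T^{LL}_{\cal H}$ is the intersection of all realized kernels containing $T$). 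Admissibility $s_{*}T_1\subseteq T_2$ is stated entirely inside $\widetilde\Phi$, so the morphism sets coincide as well and the identity functor is the required isomorphism; note that your worry about $s_{*}T_1$ remaining $\cal H$-closed is moot, because admissibility only demands containment in a closed filter, not closedness of the image. The one genuine soft spot is the geometric half: you had to guess the definition of $LG$-equivalence, and your treatment of it is a sketch of intent rather than an argument. With the definition actually used in \cite{APP} (coincidence of the closure operators $T\mapsto T^{LL}_{{\cal H}_i}$ on all of $\Phi(X)$ for all $X$), the conclusion is immediate --- closed filters are the fixed points of the closure, so the categories again coincide on the nose --- and no passage through $D_\Theta$ and Theorem~\ref{th:dual:CatF-CatD} is needed. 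You should state the definitions you are assuming explicitly before relying on them.
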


In the next section 
we will deal with one more equivalence for knowledge bases, which
is defined using category theory tools.

\subsection{Logically automorphically equivalent knowledge
bases}\label{sec:LogAutEquiv-KB}

As we have seen the notion of $LG$-equivalent and $LG$-isotypic
knowledge bases is good enough to distinguish two knowledge bases.
That is, $LG$-equivalent and $LG$-isotypic
knowledge bases are informationally equivalent. 

In this section we introduce 
\emph{logically automorphical equivalence of knowledge bases}. We
will see that this notion also gives a good characterization of
knowledge bases.

Let us start with some preliminary constructions.

\subsubsection{Functor $Cl_{\cal H}$}\label{sec:Funct-Cl}



The functor $Cl_{\cal H}$ presents a connection between the
category $\widetilde\Phi$ and the category $F_{\Theta}$ of
lattices of all closed filters, for various models ${\cal
H}_{i}=(H_{i},\Psi,f_{i})$ with algebras $H_{i}$ from a variety
$\Theta$ defined as follows.


An object $F_{\Theta}^{X}({\cal H}_i)$ of the category
$F_{\Theta}$ is the lattice of all ${\cal H}_{i}$-closed filters
in $\Phi(X)$, $X\in \Gamma$.

Morphisms of the category $F_{\Theta}$ are maps of lattice of all
closed filters, which preserve partial order on corresponding
objects, but they not to be necessarily homomorphisms of lattices.

\begin{remark}
In Section~\ref{sec:KB-model} we have  defined the category
$F_\Theta({\cal H}_{i})$ of lattices of all ${\cal H}_{i}$-closed
filters (the category of knowledge description), that is, this is
the category over a fixed model ${\cal H}_{i}$.

Thus, $F_\Theta({\cal H}_{i})$ is a full subcategory of
$F_{\Theta}$ and, hence, morphisms of the category
$F_{\Theta}({\cal H}_i)$ are morphisms of $F_{\Theta}$. But there
are other morphisms in $F_{\Theta}$, we will do not specify them.
For example, there are morphism between objects $F_\Theta^X({\cal
H}_1)$ and $F_\Theta^X({\cal H}_2)$, where ${\cal
H}_1=(H_1,\Psi,f_1)$, ${\cal H}_2=(H_2,\Psi,f_2)$ are models with
different algebras $H_1$ and $H_2$ from $\Theta$.
\end{remark}

One can define a correspondence 
$$
Cl_{\cal H}:\widetilde \Phi\to F_{\Theta},
$$
on objects as follows:
$$
Cl_{\cal H}(\Phi(X))=F_{\Theta}^{X}({\cal H}),
$$
and if $s_{*}:\Phi(X)\to \Phi(Y)$ is a morphism in $\widetilde
\Phi$, then
$$
Cl_{\cal H}(s_{*})=[s_*]^{0}: F_{\Theta}^{X}({\cal H})\to
F_{\Theta}^{Y}({\cal H})
$$
is a morphism in $F_{\Theta}$, such that
$$
[s_{*}]^{0}: (T)^{LL}_{{\cal H}} \to \big(s_{*}T\big)^{LL}_{{\cal
H}},
$$
where $T\subset \Phi(X)$, $T^{LL}_{\cal H}$ is an $\cal H$-closed
filter in $F_{\Theta}^{X}({\cal H})$,  $\big(s_{*}T
\big)^{LL}_{\cal H}$ an $\cal H$-closed filter in
$F_{\Theta}^{Y}({\cal H})$.

Next diagram 
illustrates the correspondence $Cl_{H}$:
$$
 \xymatrix{
 \Phi(X)  \ar[rr] \ar[dd]_{s_{*}}& &  F^{X}_{\Theta}({\cal H}) \ar[dd]^{[s_{*}]^{0}}
 \ar@/^3pc/[dd]^{\dots}  \ar@/^5pc/[dd]^{[s_{*}]^{i} \dots}\\
 \ar@{-}[r] & Cl_{H} \ar[r] & \\
 \Phi(Y)  \ar[rr] & &  F^{Y}_{\Theta}({\cal H}), \\
 }
$$
where  $[s_{*}]^{i}$ are some other morphisms between objects
$F^{X}_{\Theta}({\cal H})$ and $F^{Y}_{\Theta}({\cal H})$
associated with morphism $s_{*}$ of the category $\widetilde
\Phi$.

The following proposition takes place.

\begin{prop}\label{prop:Functor-Cl}
The correspondence $Cl_{\cal H}: \widetilde \Phi\to F_{\Theta}$ is
a covariant functor.
\end{prop}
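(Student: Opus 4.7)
The plan is to verify the three defining properties of a covariant functor: that $Cl_{\cal H}$ is well-defined on both objects and morphisms, that it preserves identities, and that it preserves composition. On objects there is nothing to check beyond the construction. The substantive work is for morphisms, and splits into (a) showing $[s_{*}]^{0}$ is a well-defined map of lattices of $\cal H$-closed filters, (b) showing it preserves the partial order (so it is a morphism of $F_{\Theta}$), and (c) establishing functoriality.

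First I would address well-definedness. The definition $[s_{*}]^{0}\colon T^{LL}_{\cal H}\mapsto (s_{*}T)^{LL}_{\cal H}$ is given in terms of a representative $T$; one must show the right-hand side depends only on $T^{LL}_{\cal H}$. The key lemma to invoke is Proposition~\ref{prop:Gal+s*}(1), which gives $(s_{*}T)^{L}_{\cal H}=s_{*}T^{L}_{\cal H}$. Applying this with $T$ and with $T^{LL}_{\cal H}$ in turn yields
$$
(s_{*}T)^{L}_{\cal H}=s_{*}T^{L}_{\cal H}=s_{*}(T^{LL}_{\cal H})^{L}_{\cal H}=(s_{*}T^{LL}_{\cal H})^{L}_{\cal H},
$$
so $(s_{*}T)^{LL}_{\cal H}=(s_{*}T^{LL}_{\cal H})^{LL}_{\cal H}$; in particular the value depends only on the closed filter $T^{LL}_{\cal H}$. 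For monotonicity, if $F_{1}\subseteq F_{2}$ are $\cal H$-closed filters in $\Phi(X)$, then $s_{*}F_{1}\subseteq s_{*}F_{2}$ (as images under a set-map), and the Galois closure is monotone, so $[s_{*}]^{0}(F_{1})\subseteq [s_{*}]^{0}(F_{2})$. Hence $[s_{*}]^{0}$ is a morphism in $F_{\Theta}$.

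Next I would check the identity axiom: for the identity $\mathrm{id}_{W(X)}$ the induced Halmos morphism is the identity of $\Phi(X)$, so $[\mathrm{id}_{*}]^{0}(F)=(F)^{LL}_{\cal H}=F$, since $F$ is already closed. For composition, given $s\colon W(X)\to W(Y)$ and $r\colon W(Y)\to W(Z)$, the functoriality of the assignment $s\mapsto s_{*}$ in the Halmos category $\widetilde{\Phi}$ gives $(rs)_{*}=r_{*}s_{*}$. Therefore
$$
[(rs)_{*}]^{0}(F)=(r_{*}s_{*}F)^{LL}_{\cal H},\qquad [r_{*}]^{0}\!\circ\![s_{*}]^{0}(F)=\bigl(r_{*}(s_{*}F)^{LL}_{\cal H}\bigr)^{LL}_{\cal H},
$$
and applying the well-definedness identity $(r_{*}T)^{LL}_{\cal H}=(r_{*}T^{LL}_{\cal H})^{LL}_{\cal H}$ with $T=s_{*}F$ shows these coincide.

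The only step requiring genuine argument rather than bookkeeping is the well-definedness of $[s_{*}]^{0}$; everything else follows once that is in place, because composition is then reduced to the same closure identity. Accordingly, the main obstacle is the interchange of $s_{*}$ with the Galois closure, which is precisely what Proposition~\ref{prop:Gal+s*}(1) provides. I would therefore center the write-up on that identity and deduce the remaining axioms as short corollaries.
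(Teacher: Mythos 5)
Your proof is correct and follows essentially the same route as the paper: the heart of both arguments is commuting $s_{*}$ past the Galois closure via Proposition~\ref{prop:Gal+s*}(1) together with the identity $T^{LLL}_{\cal H}=T^{L}_{\cal H}$, which is exactly how the paper reduces $\bigl(s^{2}_{*}(s^{1}_{*}T)^{LL}_{\cal H}\bigr)^{LL}_{\cal H}$ to $(s^{2}_{*}s^{1}_{*}T)^{LL}_{\cal H}$. The only difference is organizational: you isolate the closure-interchange identity up front as a well-definedness statement (and also check order-preservation, which the paper leaves implicit), whereas the paper performs the same computation inline for the composition axiom.
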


\begin{proof}
If $s_{*}=id_{\Phi(X)}:\Phi(X)\to \Phi(X)$ is the identity
morphism of the object $\Phi(X)$, then $s_{*}T=T$ for every
$T\subset \Phi(X)$. Thus,  $Cl_{\cal
H}(id_{\Phi(X)})=[id_{\Phi(X)}]^{0}=id_{F^{X}_{\Theta}({\cal H})}$
is the identity morphism of the object $F^{X}_{\Theta}({\cal H})$.

 Let $s^{1}_{*}:\Phi(X)\to \Phi(Y)$, $s^{2}_{*}:\Phi(Y)\to
\Phi(Z)$ be morphisms in $\widetilde \Phi$. Take  a subset $T$
from $\Phi(X)$. Then
$$
s_{*}^{1}: T\to s_{*}^{1}T, 
$$
$$
s_{*}^{2}: s_{*}^{1}T \to s_{*}^2(s_{*}^{1}T)
$$
and
\begin{equation}\label{eq:Cl(s1s2)}
 Cl_{\cal H}(s^{2}_{*}\circ s^{1}_{*}): T^{LL}_{\cal H} \to
(s^{2}_{*}s^{1}_{*}T)^{LL}_{\cal H}.
\end{equation}
From the other hand,
$$
Cl_{\cal H}(s^1_{*})=[s_{*}^{1}]^{0}: T^{LL}_{\cal H}\to \big(
s_{*}^{1}T \big)^{LL}_{\cal H}$$
 and
$$
Cl_{\cal H}(s^2_{*})=[s_{*}^{2}]^{0}:
\big(s_{*}^{1}T\big)^{LL}_{\cal H} \to
\big(s_{*}^2(s_{*}^{1}T)^{LL}_{\cal H}\big)^{LL}_{\cal H}.
$$
Therefore,
$$
Cl_{\cal H}(s^1_{*})\circ Cl_{\cal H}(s^2_{*})=
[s_{*}^{1}]^{0}\circ [s_{*}^{2}]^{0}: T^{LL}_{\cal H}\to
\big(s_{*}^2(s_{*}^{1}T)^{LL}_{\cal H}\big)^{LL}_{\cal H}.
$$
Let us simplify the right part of the last equation. By
Proposition~\ref{prop:Gal+s*}, the equality $(s_{*}T)^{L}_{\cal
H}=s_{*} T^{L}_{\cal H}$ takes place. Thus,
$$
\big(s_{*}^2(s_{*}^{1}T)^{LL}_{\cal H}\big)^{LL}_{\cal
H}=\big(s_{*}^2\big((s_{*}^{1}T)^{LL}_{\cal H}\big)^{L}_{\cal
H}\big)^{L}_{\cal H}.
$$
%
Using the property of the Galois correspondence, namely,
$T^{LLL}_{\cal H}=T^{L}_{\cal H}$, we have
$$
\big(s_{*}^2\big((s_{*}^{1}T)^{LL}_{\cal H}\big)^{L}_{\cal
H}\big)^{L}_{\cal H}=\big( s^{2}_{*}(s^{1}_{*}T)^{L}_{\cal
H}\big)^{L}_{\cal H}.
$$
Applying again the equality $(s_{*}T)^{L}_{\cal H}=s_{*}
T^{L}_{\cal H}$, we
get 
$$
\big( s^{2}_{*}(s^{1}_{*}T)^{L}_{\cal H}\big)^{L}_{\cal
H}=(s^{2}_{*}s^{1}_{*}T)^{LL}_{\cal H}.
$$
Thus,
 \begin{equation}\label{eq:Cl(s1)Cl(s2)}
  Cl_{\cal H}(s^1_{*})\circ Cl_{\cal
H}(s^2_{*})= [s_{*}^{1}]^{0}\circ [s_{*}^{2}]^{0}: T^{LL}_{\cal
H}\to (s^{2}_{*}s^{1}_{*}T)^{LL}_{\cal H}.
\end{equation}
Comparing  equations (\ref{eq:Cl(s1s2)}) and
(\ref{eq:Cl(s1)Cl(s2)}), we conclude that
$$
Cl_{\cal H}(s^{2}_{*}\circ s^{1}_{*})=Cl_{\cal H}(s^{2}_{*}) \circ
Cl_{\cal H}(s^{1}_{*}),
$$
and $Cl_{\cal H}$ is a covariant functor.
\end{proof}

\subsubsection{Definition of logically automorphically equivalence}

We will use the notion of \emph{isomorphism}  of two functors
(\emph{natural isomorphism} in terms of \cite{MacLane}).

\begin{defin}\label{def:isom-of-functors}
Let ${\cal F}_1$ and ${\cal F}_2$ be functors from a category
${\cal C}_1$ to a category ${\cal C}_2$. An isomorphism $\alpha:
{\cal F}_1 \to {\cal F}_2 $ of functors ${\cal F}_1$ and ${\cal
F}_2$ is a function which assigns to each object $C$ in ${\cal
C}_1$ a two-sided morphism $\alpha(C): {\cal
F}_1(C)\leftrightarrow {\cal F}_2(C)$ in the category ${\cal C}_2$
in such a way that for every morphism $\nu:C\to C'$ of the
category ${\cal C}_1$ the diagram is commutative:
$$
\xymatrix{
{\cal F}_1(C) \ar@{<->}[r]^{\alpha(C)} \ar[d]_{{\cal F}_1(\nu)} & {\cal F}_2(C) \ar[d]^{{\cal F}_2(\nu)}\\
{\cal F}_1(C') \ar@{<->}[r]^{\alpha(C')} & {\cal F}_2(C'). }
$$
\end{defin}

 In Section~\ref{sec:Funct-Cl} we construct the covariant
functor
$$
Cl_{\cal H}: \widetilde \Phi \to F_{\Theta},
$$
 where ${\cal
H}=(H,\Psi, f)$ is a model, $\widetilde \Phi$ is the category of
algebras of formulas, $F_{\Theta}$ is the category of lattices of
closed filters. Using this functor we define a notion of
\emph{logically automorphical equivalence for models}.

Let two models ${\cal H}_1=(H_1,\Psi, f_1)$ and ${\cal
H}_2=(H_2,\Psi, f_2)$ be given and let $\varphi$ be an
automorphism of the category $\widetilde \Phi$.

\sloppy{
\begin{defin}\label{def:AutomEquiv-of-models}
Models  ${\cal H}_1=(H_1, \Psi, f_1)$ and  ${\cal H}_2=(H_2, \Psi,
f_2)$ 
are called logically automorphically equivalent if for some
automorphism $\varphi$ of the category $\widetilde \Phi$ there is
the functor isomorphism
$$
\alpha_\varphi :Cl_{{\cal H}_1} \to Cl_{{\cal H}_2} \cdot \varphi.
$$
\end{defin}
} 

This definition gives rise to the notion of logically
automorphically equivalent knowledge bases. Let two models ${\cal
H}_1=(H_1,\Psi,f_1)$, ${\cal H}_2=(H_2,\Psi,f_2)$ and the
corresponding knowledge bases $KB({\cal H}_1)$ and $KB({\cal
H}_2)$  be given.

\begin{defin}\label{def:AutomEquiv-KB}%
Knowledge bases $KB({\cal H}_1)$ and $KB({\cal H}_2)$ are called
logically automorphically equivalent if the corresponding models
${\cal H}_1$ and ${\cal H}_2$ are logically automorphically
equivalent.
\end{defin}

\subsubsection{Auxiliary constructions}\label{sec:PrelConst-LogAutEquiv-KB}

In this section we present 
results which we will use to prove the main result about logically
automorphically equivalent knowledge bases.

Let two logically automorphically equivalent models  ${\cal
H}_1=(H_1, \Psi, f_1)$ and  ${\cal H}_2=(H_2, \Psi, f_2)$ be
given.

Logically automorphical equivalence of the models ${\cal
H}_1=(H_1, \Psi, f_1)$ and  ${\cal H}_2=(H_2, \Psi, f_2)$ means
that there exists an automorphism $\varphi$ of the category
$\widetilde \Phi$, such that the functors $Cl_{{\cal H}_1}$ and
$Cl_{{\cal H}_2}\cdot \varphi$ are isomorphic. This fact implies
that there is the commutative diagram (see
Definition~\ref{def:isom-of-functors}):
\begin{equation}\label{diag:aut-equiv-1}
\CD
Cl_{{\cal H}_1} (\Phi(X)) @> \alpha_{\varphi}(\Phi(X)) >> Cl_{{\cal H}_2}\cdot \varphi (\Phi(X))\\
@V Cl_{{\cal H}_1}(s_{*})  VV @VV Cl_{{\cal H}_2}\cdot \varphi (s_{*}) V\\
Cl_{{\cal H}_1} (\Phi(Y)) @> \alpha_{\varphi}(\Phi(Y)) >>
Cl_{{\cal H}_2}\cdot \varphi (\Phi(Y)),
\endCD
\end{equation}
where $\alpha_{\varphi}:Cl_{{\cal H}_1}\to Cl_{{\cal H}_2}\cdot
\varphi$ is an isomorphism of functors, $\Phi(X)$ and  $\Phi(Y)$
are objects of the category $\widetilde \Phi$ and $s_{*}:
\Phi(X)\to \Phi(Y)$ is a morphism in $\widetilde \Phi$.

 Recall that
$$
Cl_{{\cal H}_i} (\Phi(X))=F_{\Theta}^{X}({\cal H}_i),
$$
and if $s_{*}:\Phi(X)\to \Phi(Y)$, then
$$
Cl_{{\cal H}_i}(s_{*})=[s_{*}]^{0}_{{\cal H}_i}:
F_{\Theta}^{X}({\cal H}_i)\to F_{\Theta}^{Y}({\cal H}_i),
$$
such that
$$
[s_{*}]^{0}_{{\cal H}_i}: T^{LL}_{{\cal H}_i}\to \big(s_{*}T
\big)^{LL}_{{\cal H}_i},
$$
where $T\subset\Phi(X)$, $T^{LL}_{{\cal H}_i}$ is an ${{\cal
H}_i}$-closed filter in $F_{\Theta}^{X}({{\cal H}_i})$,
$\big(s_{*}T \big)^{LL}_{{\cal H}_i}$ an ${{\cal H}_i}$-closed
filter in $F_{\Theta}^{Y}({{\cal H}_i})$, for more details see
Section~\ref{sec:Funct-Cl}.

\begin{remark}
We add subscribe index ${\cal H}_i$ for morphism $[s_{*}]^{0}$ in
order to distinguish morphisms in $F_{\Theta}({\cal H}_1)$ and
$F_{\Theta}({\cal H}_2)$.
\end{remark}

Let $\varphi$ be an automorphism of $\widetilde \Phi$, such that
$$
\varphi(\Phi(X))=\Phi(X') \mbox{ and } \
\varphi(\Phi(Y))=\Phi(Y'),
$$
where $X,Y,X',Y'\in \Gamma$. In particular,  this means that if
$s_{*}:\Phi(X)\to \Phi(Y)$, then $\varphi(s_{*}):\Phi(X')\to
\Phi(Y')$.

Using the settings above,  we have
$$
Cl_{{\cal H}_2}\cdot \varphi (\Phi(X))= Cl_{{\cal H}_2}
(\Phi(X'))=F_{\Theta}^{X'}({\cal H}_2),
$$
$$
Cl_{{\cal H}_2}\cdot \varphi (\Phi(Y))= Cl_{{\cal H}_2}
(\Phi(Y'))=F_{\Theta}^{Y'}({\cal H}_2),
$$
$$
Cl_{{\cal H}_2}\cdot \varphi (s_{*})= Cl_{{\cal H}_2}(\varphi
(s_{*}))=[\varphi (s_{*})]^{0}_{{\cal H}_2}.
$$

We can rewrite diagram~(\ref{diag:aut-equiv-1}) as follows
\begin{equation}\label{diag:aut-equiv-2}
\CD
F_{\Theta}^{X}({\cal H}_1) @> \alpha_{\varphi} >> F_{\Theta}^{X'}({\cal H}_2)\\
@V [s_{*}]^{0}_{{\cal H}_1}  VV @VV [\varphi (s_{*})]^{0}_{{\cal H}_2} V\\
F_{\Theta}^{Y}({\cal H}_1) @> \alpha_{\varphi} >>
F_{\Theta}^{Y'}({\cal H}_2).
\endCD
\end{equation}

\begin{remark}
Here and later on we will write simply $\alpha_{\varphi}$ instead
of $\alpha_{\varphi}(\Phi(X))$ or $\alpha_{\varphi}(\Phi(Y))$.
\end{remark}

The next proposition provides a connection between morphism in
categories $F_{\Theta}({\cal H}_1)$ and $F_{\Theta}({\cal H}_2)$
over logically automorphically equivalent models  ${\cal H}_1$ and
${\cal H}_2$.


\begin{prop}\label{prop:AutEquivModel-AdmissibleSets}
Let two logically automorphically equivalent models  ${\cal
H}_1=(H_1, \Psi, f_1)$ and  ${\cal H}_2=(H_2, \Psi, f_2)$ be
given. The map $[s_{*}]_{{\cal H}_1}: T_1\to T_2$ is admissible if
and only if the map $[\varphi(s_{*})]_{{\cal H}_2}:
\alpha_{\varphi}T_1 \to \alpha_{\varphi}T_2 $ is admissible, where
$T_1\in F^{X}_{\Theta}({\cal H}_1)$, $T_2\in F^{Y}_{\Theta}({\cal
H}_1)$.
\end{prop}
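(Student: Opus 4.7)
The plan is to unpack the definition of admissibility on both sides, translate it into an inclusion between values of the functor $Cl_{{\cal H}_i}$, and then use the commutativity of diagram \eqref{diag:aut-equiv-2} together with the fact that $\alpha_\varphi$ both preserves and reflects the inclusion order on closed filters.

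First, I would rewrite admissibility in a form that matches the functor $Cl_{{\cal H}_1}$. By definition, $[s_*]_{{\cal H}_1}\colon T_1\to T_2$ is admissible iff $s_*T_1\subseteq T_2$. Since $T_2$ is ${\cal H}_1$-closed and Galois closure is monotone and idempotent, this is equivalent to $(s_*T_1)^{LL}_{{\cal H}_1}\subseteq T_2$, i.e.\ to $[s_*]^0_{{\cal H}_1}(T_1)\subseteq T_2$ in $F^{Y}_\Theta({\cal H}_1)$. The same reformulation on the ${\cal H}_2$-side gives: $[\varphi(s_*)]_{{\cal H}_2}\colon \alpha_\varphi T_1\to \alpha_\varphi T_2$ is admissible iff $[\varphi(s_*)]^0_{{\cal H}_2}(\alpha_\varphi T_1)\subseteq \alpha_\varphi T_2$ in $F^{Y'}_\Theta({\cal H}_2)$.

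Second, I would apply the commutativity of diagram \eqref{diag:aut-equiv-2} at the object $T_1\in F^{X}_\Theta({\cal H}_1)$, yielding
\[
\alpha_\varphi\bigl([s_*]^0_{{\cal H}_1}(T_1)\bigr) \;=\; [\varphi(s_*)]^0_{{\cal H}_2}(\alpha_\varphi T_1).
\]
Combining this identity with the two reformulations above reduces the claim to the equivalence
\[
[s_*]^0_{{\cal H}_1}(T_1)\subseteq T_2 \;\Longleftrightarrow\; \alpha_\varphi\bigl([s_*]^0_{{\cal H}_1}(T_1)\bigr)\subseteq \alpha_\varphi T_2.
\]

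The main obstacle (and really the only non-bookkeeping point) is to justify that $\alpha_\varphi$ both \emph{preserves} and \emph{reflects} the partial order on closed filters. Preservation is immediate because morphisms in $F_\Theta$ are, by its very definition, order-preserving maps of lattices of closed filters, and $\alpha_\varphi$ is such a morphism. Reflection follows from the fact that, per Definition~\ref{def:isom-of-functors}, $\alpha_\varphi$ is a two-sided morphism, hence has an inverse which is again a morphism in $F_\Theta$ and therefore also order-preserving; applying $\alpha_\varphi^{-1}$ to an inclusion in $F^{Y'}_\Theta({\cal H}_2)$ yields the corresponding inclusion back in $F^{Y}_\Theta({\cal H}_1)$. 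With this in hand, the displayed equivalence is immediate, and reading the chain of iff's in both directions gives the statement of the proposition.
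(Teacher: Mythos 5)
Your proof is correct and follows essentially the same route as the paper's: unpack admissibility as $(s_*T_1)^{LL}_{{\cal H}_1}\subseteq T_2$, invoke the commutative square coming from the naturality of $\alpha_\varphi$, and transport the inclusion along $\alpha_\varphi$. If anything, you are more complete than the paper, which argues the forward implication via the diagram and then asserts the ``if and only if'' without spelling out the converse; your explicit observation that $\alpha_\varphi$ reflects the order because its inverse is again an order-preserving morphism of $F_\Theta$ supplies exactly the missing half.
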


\begin{proof}
Diagram~(\ref{diag:aut-equiv-2}) gives rises  to the following
diagram:
$$
 \xymatrix{
T_1  \ar@{<->}[r]^{\alpha_{\varphi}} \ar[d]_{[s_{*}]^{0}_{{\cal
H}_1}} & \alpha_{\varphi} T_1 \ar[d]^{[\varphi(s_{*})]^{0}_{{\cal
H}_2}}\\
(s_{*}T_1)^{LL}_{{\cal H}_1} \ar@{<->}[r]^(0.4){\alpha_{\varphi}}
&
\big(\varphi(s_{*})(\alpha_{\varphi}T_1)\big)^{LL}_{{\cal H}_2}. \\
 }
$$
By the definition, the map $[s_{*}]{{\cal H}_1}: T_1\to T_2$ is
admissible if and only if $s_{*}T_1\subseteq T_2$. Moreover,
$T_{2}$ is an ${\cal H}_1$-closed filter, hence
$\big(s_{*}T_1\big)^{LL}_{{\cal H}_1}\subseteq T_2$. Thus, we can
extend the diagram above as follows:
$$
 \xymatrix{
T_1  \ar@{<->}[r]^{\alpha_{\varphi}} \ar[d]_{[s_{*}]^{0}_{{\cal
H}_1}} \ar@/_5pc/[dd]_{[s_{*}]_{{\cal H}_1}} & \alpha_{\varphi}
T_1 \ar[d]^{[\varphi(s_{*})]^{0}_{{\cal
H}_2}}\\
(s_{*}T_1)^{LL}_{{\cal H}_1} \ar@{<->}[r]^(0.4){\alpha_{\varphi}}
\ar@{}[d]|{\mid\bigcap} &
\big(\varphi(s_{*})(\alpha_{\varphi}T_1)\big)^{LL}_{{\cal H}_2} \\
T_2  \ar@{<->}[r]^{\alpha_{\varphi}} & \alpha_{\varphi} T_2.
 }
$$
From the diagram follows that
$$
\big(\varphi(s_{*}) (\alpha_{\varphi}T_1)\big)^{LL}_{{\cal
H}_2}\subseteq \alpha_{\varphi} T_2.
$$
Moreover, $\varphi(s_{*}) (\alpha_{\varphi}T_1) \subseteq
\big(\varphi(s_{*}) (\alpha_{\varphi}T_1)\big)^{LL}_{{\cal H}_2}$
 and the map $[\varphi(s_{*})]_{{\cal H}_2}:
\alpha_{\varphi}T_1 \to \alpha_{\varphi}T_2 $ is admissible.

Thus, the map $[s_{*}]_{{\cal H}_1}: T_1\to T_2$ is admissible if
and only if the map $[\varphi(s_{*})]_{{\cal H}_2}:
\alpha_{\varphi}T_1 \to \alpha_{\varphi}T_2 $ is admissible. The
following diagram gives the illustration:
$$
 \xymatrix{
T_1  \ar@{<->}[r]^{\alpha_{\varphi}} \ar[d]_{[s_{*}]^{0}_{{\cal
H}_1}} \ar@/_5pc/[dd]_{[s_{*}]_{{\cal H}_1}} & \alpha_{\varphi}
T_1 \ar[d]^{[\varphi(s_{*})]^{0}_{{\cal
H}_2}} \ar@/^6pc/[dd]^{[\varphi (s_{*})]_{{\cal H}_2}}\\
(s_{*}T_1)^{LL}_{{\cal H}_1} \ar@{<->}[r]^(0.4){\alpha_{\varphi}}
\ar@{}[d]|{\mid\bigcap} &
\big(\varphi(s_{*})(\alpha_{\varphi}T_1)\big)^{LL}_{{\cal H}_2} \ar@{}[d]|{\mid\bigcap} \\
T_2  \ar@{<->}[r]^{\alpha_{\varphi}} & \alpha_{\varphi} T_2.
 }
$$

\end{proof}

\subsubsection{The main result}\label{sec:MainTh-LogAutEquiv-KB}

In this section we present the main result of this paper, namely,
we will show that logically automorphically equivalent knowledge
bases  are informationally equivalent.

The following preliminary result describes relation between
logically automorphically equivalent models and categories of
lattices of $\cal H$-closed filters over corresponding models.

In this section we will use notations from the previous section.

\begin{theorem}\label{th:Aut-Equiv-Model}
If models  ${\cal H}_1=(H_1, \Psi, f_1)$ and  ${\cal H}_2=(H_2,
\Psi, f_2)$ are logically automorphically equivalent, then the
categories $F_\Theta({\cal H}_1)$ and $F_\Theta({\cal H}_2)$ are
isomorphic.
\end{theorem}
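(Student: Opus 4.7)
The plan is to build an isomorphism functor $\Lambda : F_\Theta({\cal H}_1)\to F_\Theta({\cal H}_2)$ directly from the two pieces of data that logical automorphical equivalence supplies: the automorphism $\varphi$ of $\widetilde\Phi$ and the natural isomorphism $\alpha_\varphi : Cl_{{\cal H}_1}\to Cl_{{\cal H}_2}\cdot \varphi$. On objects, since $\varphi(\Phi(X))=\Phi(X')$ for some $X'\in\Gamma$, set $\Lambda(F_\Theta^X({\cal H}_1)) := F_\Theta^{X'}({\cal H}_2)$; the component $\alpha_\varphi(\Phi(X))$ is by hypothesis a two-sided morphism in $F_\Theta$ between these lattices, so it provides the bijection identifying their elements. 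On morphisms, given an admissible $[s_*]_{{\cal H}_1} : T_1\to T_2$ with $T_1\in F_\Theta^X({\cal H}_1)$ and $T_2\in F_\Theta^Y({\cal H}_1)$, I define
$$
\Lambda([s_*]_{{\cal H}_1}) := [\varphi(s_*)]_{{\cal H}_2} : \alpha_\varphi T_1 \to \alpha_\varphi T_2.
$$
Well-definedness is exactly the content of Proposition~\ref{prop:AutEquivModel-AdmissibleSets}: admissibility transports across the equivalence in both directions.

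The functor axioms then reduce to properties of $\varphi$ together with the composition law $[s^2_*]\circ [s^1_*]=[s^2_*s^1_*]$ from Section~\ref{sec:CatKnDesc}. Since $\varphi$ is itself a functor, $\varphi(\mathrm{id}_*)=\mathrm{id}_*$ yields $\Lambda(\mathrm{id})=\mathrm{id}$, and $\varphi(s^2_*s^1_*)=\varphi(s^2_*)\varphi(s^1_*)$ gives
$$
\Lambda([s^2_*]_{{\cal H}_1}\circ [s^1_*]_{{\cal H}_1}) = [\varphi(s^2_*s^1_*)]_{{\cal H}_2} = [\varphi(s^2_*)]_{{\cal H}_2}\circ [\varphi(s^1_*)]_{{\cal H}_2}.
$$
The inverse functor $\Lambda^{-1}: F_\Theta({\cal H}_2)\to F_\Theta({\cal H}_1)$ is constructed by applying the same recipe to $\varphi^{-1}$ and $\alpha_\varphi^{-1}$, which exist because $\varphi$ is an automorphism of $\widetilde\Phi$ and $\alpha_\varphi$ is an isomorphism of functors (each component $\alpha_\varphi(\Phi(X))$ has a two-sided inverse by Definition~\ref{def:isom-of-functors}). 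The identities $\Lambda^{-1}\Lambda=\mathrm{id}$ and $\Lambda\Lambda^{-1}=\mathrm{id}$ reduce on objects to $\varphi^{-1}\varphi=\mathrm{id}$ and on morphisms to the corresponding identity for $\alpha_\varphi^{\pm 1}$.

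The main obstacle I anticipate lies not in any single step but in gluing two layers of structure: the naturality square~(\ref{diag:aut-equiv-2}) arising from $\alpha_\varphi$ is expressed in terms of the canonical morphisms $[s_*]^0_{{\cal H}_i}$ of $F_\Theta$, whose domains and codomains are always Galois-closures $T^{LL}_{{\cal H}_i}$; yet morphisms of $F_\Theta({\cal H}_i)$ are the admissible maps $[s_*]_{{\cal H}_i}:T_1\to T_2$ between arbitrary pairs of closed filters. Bridging these two levels is precisely what Proposition~\ref{prop:AutEquivModel-AdmissibleSets} does (as its proof in the previous section already indicates through the enlarged diagram containing both $[s_*]^0$ and $[s_*]$), so the remaining work is a careful diagram chase tying the naturality of $\alpha_\varphi$ to that proposition, which I expect to be routine but notation-heavy.
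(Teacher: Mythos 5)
Your proposal is correct and follows essentially the same route as the paper: the same object map $F_\Theta^X({\cal H}_1)\mapsto F_\Theta^{X'}({\cal H}_2)$ via $\varphi$, the same morphism map $[s_*]_{{\cal H}_1}\mapsto[\varphi(s_*)]_{{\cal H}_2}:\alpha_\varphi T_1\to\alpha_\varphi T_2$ with well-definedness supplied by Proposition~\ref{prop:AutEquivModel-AdmissibleSets}, and the same construction of the inverse functor from $\varphi^{-1}$ and $\alpha_\varphi^{-1}$. No substantive differences to report.
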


\begin{proof}
To prove the theorem we will construct a correspondence (functor)
$$
{\cal F}: F_\Theta({\cal H}_1) \to F_\Theta({\cal H}_2)
$$
and show that it gives rise to an isomorphism of the given 
categories.


For an object $F_{\Theta}^{X}({\cal H}_1)$ from $F_\Theta({\cal
H}_1)$ we set
$$
{\cal F}(F_{\Theta}^{X}({\cal H}_1))=F_{\Theta}^{X'}({\cal H}_2),
$$
where $X$ and $X'$ are correlated by the given automorphism
$\varphi$, that is, $\varphi(\Phi(X))=\Phi(X')$ (see
Section~\ref{sec:PrelConst-LogAutEquiv-KB}).

Let $[s_{*}]_{{\cal H}_1}: F_{\Theta}^{X}({\cal H}_1)\to
F_{\Theta}^{Y}({\cal H}_1)$ be a morphism in $F_{\Theta}({\cal
H}_1)$, such that
$$
[s_{*}]_{{\cal H}_1}: T_1\to T_2,
$$
where $T_{1}\in F_{\Theta}^{X}({\cal H}_1)$, $T_{2}\in
F_{\Theta}^{Y}({\cal H}_1)$ and  $s_{*}T_1\subseteq T_2$.

We determine the morphism 
$$
{\cal F}([s_{*}]_{{\cal H}_1}): F_{\Theta}^{X'}({\cal H}_2)\to
F_{\Theta}^{Y'}({\cal H}_2)
$$
by the rule 
$$
{\cal F}([s_{*}]_{{\cal H}_1})=[\varphi(s_{*})]_{{\cal H}_2}:
\alpha_{\varphi}T_1 \to \alpha_{\varphi}T_2.
$$
By assumption, $\alpha_{\varphi}$ is a bijection (two-sided
morphism) between objects $F_{\Theta}^{X}({\cal H}_1)$ and
$F_{\Theta}^{X'}({\cal H}_2)$, so if $T$ runs all ${\cal
H}_1$-closed filters from $F_{\Theta}^{X}({\cal H}_1)$, then
$\alpha_{\varphi}(T)$ runs all ${\cal H}_2$-closed filters from
$F_{\Theta}^{X'}({\cal H}_2)$. Moreover, in view of
Proposition~\ref{prop:AutEquivModel-AdmissibleSets},  the morphism
$[\varphi(s_{*})]_{{\cal H}_2}$ is defined correctly, that is,
$\varphi(s_{*})(\alpha_{\varphi}T_1) \subseteq
\alpha_{\varphi}T_2$.

Let us show that defined  in such a way correspondence ${\cal F}$
is, indeed, a functor.

Denote by $id_{F^{X}_{\Theta}({\cal H}_1)}$ the identity morphism
of the object $F^{X}_{\Theta}({\cal H}_1)$. By definition of $\cal
F$ we have the diagram:
$$
 \xymatrix{
T  \ar@{<->}[r]^{\alpha_{\varphi}}
\ar[d]_{id_{F^{X}_{\Theta}({\cal H}_1)}} & \alpha_{\varphi} T \ar[d]^{{\cal F}( id_{F^{X}_{\Theta}({\cal H}_1)})}\\
T \ar@{<->}[r]^{\alpha_{\varphi}} &
\alpha_{\varphi} T, \\
 }
$$
where $T\in F^{X}_{\Theta}({\cal H}_1)$. Thus, ${\cal F}(
id_{F^{X}_{\Theta}({\cal H}_1)})$ is the identity morphism of
$F^{X'}_{\Theta}({\cal H}_2)={\cal F}( F^{X}_{\Theta}({\cal
H}_1))$.


Let two morphisms of $F_{\Theta}({\cal H}_1)$ be given:
$$
[s_{*}^{1}]_{{\cal H}_1}:F_{\Theta}^{X}({\cal H}_1)\to
F_{\Theta}^{Y}({\cal H}_1),
$$
$$
[s_{*}^{2}]_{{\cal H}_1}:F_{\Theta}^{Y}({\cal H}_1)\to
F_{\Theta}^{Z}({\cal H}_1).
$$
We will check that
$$
{\cal F}([s_{*}^{2}]_{{\cal H}_1}\circ [s_{*}^{1}]_{{\cal H}_1}])=
{\cal F}([s_{*}^{2}]_{{\cal H}_1}) \circ {\cal
F}([s_{*}^{1}]_{{\cal H}_1}]).
$$

Let $T_1$ be an ${\cal H}_1$-closed filter from
$F_{\Theta}^{X}({\cal H}_1)$ and
$$
[s_{*}^{1}]_{{\cal H}_1}: T_1\to T_2,
$$
$$
[s_{*}^{2}]_{{\cal H}_1}: T_2 \to T_3.
$$
Thus,
$$
[s_{*}^{2}]_{{\cal H}_1}\circ [s_{*}^{1}]_{{\cal H}_1} 
: T_1\to T_3
$$
and
\begin{equation}\label{eq:F(s2s1)}
{\cal F}([s_{*}^{2}]_{{\cal H}_1}\circ [s_{*}^{1}]_{{\cal H}_1}):
\alpha_{\varphi}(T_1) \to \alpha_{\varphi}(T_3).
\end{equation}
From the other hand,
$$
{\cal F }([s_{*}^{1}]_{{\cal H}_1}): \alpha_{\varphi} T_1 \to
\alpha_{\varphi} T_2
$$
and
$$
{\cal F}([s_{*}^{2}]_{{\cal H}_1}): \alpha_{\varphi} T_2 \to
\alpha_{\varphi} T_3.
$$
Consequently, the composition of functors ${\cal F
}([s_{*}^{1}]_{{\cal H}_1})$ and ${\cal F}([s_{*}^{2}]_{{\cal
H}_1})$ works as follows:
\begin{equation}\label{eq:F(s2)F(s1)}
{\cal F }([s_{*}^{2}]_{{\cal H}_1})\circ {\cal
F}([s_{*}^{1}]_{{\cal H}_1}): \alpha_{\varphi} T_1 \to
\alpha_{\varphi} T_3.
\end{equation}

 \noindent
 Summarizing equations~(\ref{eq:F(s2s1)}) and
(\ref{eq:F(s2)F(s1)}), we have
$$
{\cal F}([s_{*}^{2}]_{{\cal H}_1}\circ [s_{*}^{1}]_{{\cal
H}_1})={\cal F }([s_{*}^{2}]_{{\cal H}_1})\circ {\cal
F}([s_{*}^{1}]_{{\cal H}_1}).
$$
Thus, the correspondence ${\cal F}: F_{\Theta}({\cal H}_1) \to
F_{\Theta}({\cal H}_2)$ is a covariant functor.

Now we will construct the inverse functor
$$
{\cal F}': F_{\Theta}({\cal H}_2) \to F_{\Theta}({\cal H}_1).
$$
The functor ${\cal F}'$ works on objects as
$$
{\cal F}' \big( F^{X'}_{\Theta}({\cal H}_2) \big) =
F^{X}_{\Theta}({\cal H}_1),
$$
where $X'$ and $X$ are correlated by the given automorphism
$\varphi$, that is, $\varphi^{-1}(\Phi(X'))=\Phi(X)$ (see
Section~\ref{sec:PrelConst-LogAutEquiv-KB}).

Take  a morphism $s_{*}: \Phi(X')\to \Phi(Y')$ in the category
$\widetilde \Phi$. Let
$$
[s_{*}]_{{\cal H}_2}: F_{\Theta}^{X'}({\cal H}_2)\to
F_{\Theta}^{Y'}({\cal H}_2)
$$
 be a
morphism in  $F_{\Theta}({\cal H}_2)$, such that
$$
[s_{*}]_{{\cal H}_2}: T_1\to T_2,
$$
where $T_1\in F_{\Theta}^{X'}({\cal H}_2)$, $T_2 \in
F_{\Theta}^{Y'}({\cal H}_2)$ and $s_{*} T_1\subseteq T_2$.
 We determine the morphism
$$
{\cal F}'([s_{*}]_{{\cal H}_2}): F_{\Theta}^{X}({\cal H}_1)\to
F_{\Theta}^{Y}({\cal H}_1)
$$
 as
follows
\begin{equation}\label{eq:F'(s*)}
{\cal F}'([s_{*}]_{{\cal H}_2})=[\varphi^{-1}(s_{*})]_{{\cal
H}_1}:
 \alpha^{-1}_{\varphi}T_1\to \alpha^{-1}_{\varphi}T_2.
\end{equation}

 The map $\alpha^{-1}_{\varphi}$ is defined, since
$\alpha_{\varphi}$ is a  two-sided morphism 
 between objects $F_{\Theta}^{X}({\cal H}_1)$ and $F_{\Theta}^{X'}({\cal
H}_2)$. Moreover, if $T$ runs all ${\cal H}_2$-closed filters from
$F_{\Theta}^{X'}({\cal H}_2)$, then $\alpha^{-1}_{\varphi}(T)$
runs all ${\cal H}_1$-closed filters from $F_{\Theta}^{X}({\cal
H}_1)$. Hence, in view of
Proposition~\ref{prop:AutEquivModel-AdmissibleSets},  morphism
$[\varphi^{-1}(s_{*})]_{{\cal H}_1}$ is determined correctly and
the diagram takes place:
$$
 \xymatrix{
T_1  \ar@{<->}[r]^{\alpha_{\varphi}^{-1}} \ar[d]_{[s_{*}]_{{\cal
H}_2}}  & \alpha_{\varphi}^{-1} T_1
\ar[d]^{[\varphi^{-1}(s_{*})]_{{\cal
H}_1}} \\
T_2  \ar@{<->}[r]^{\alpha_{\varphi}^{-1}} & \alpha_{\varphi}^{-1}
T_2.
 }
$$

Let us show that defined  in such a way correspondence ${\cal F}
'$ is a functor.

If $id_{F^{X'}_{\Theta}({\cal H}_2)}$ the identity morphism of the
object  $F^{X'}_{\Theta}({\cal H}_2)$, then the following diagram
takes place:
$$
 \xymatrix{
T  \ar@{<->}[r]^{\alpha_{\varphi}^{-1}}
\ar[d]_{id_{F^{X'}_{\Theta}({\cal H}_2)}}  & \alpha_{\varphi}^{-1}
T \ar[d]^{{\cal F}(id_{F^{X'}_{\Theta}({\cal H}_2)})} \\
T  \ar@{<->}[r]^{\alpha_{\varphi}^{-1}} & \alpha_{\varphi}^{-1} T,
 }
$$
where $T\in F^{X'}_{\Theta}({\cal H}_2)$, $\alpha_{\varphi}^{-1} T
\in F^{X}_{\Theta}({\cal H}_1)$
%
Thus, ${\cal F}'(id_{F^{X'}_{\Theta}({\cal H}_2)})=id_{{\cal
F}'({F^{X'}_{\Theta}({\cal H}_2)})}=id_{{F^{X}_{\Theta}({\cal
H}_1)}}$.

Take two morphisms in $F_{\Theta}({\cal H}_2)$:
$$
[s_{*}^{1}]_{{\cal H}_2}:F_{\Theta}^{X'}({\cal H}_2)\to
F_{\Theta}^{Y'}({\cal H}_2),
$$
such that
$$
[s_{*}^{1}]_{{\cal H}_2}: T_1 \to T_2,
$$
and
$$
[s_{*}^{2}]_{{\cal H}_2}:F_{\Theta}^{Y'}({\cal H}_2)\to
F_{\Theta}^{Z'}({\cal H}_2),
$$
such that
$$
[s_{*}^{2}]_{{\cal H}_2}: T_2 \to T_3,
$$
where $T_1\in F_{\Theta}^{X'}({\cal H}_2)$, $T_2 \in
F_{\Theta}^{Y'}({\cal H}_2)$, $T_3\in F_{\Theta}^{Z'}({\cal
H}_2)$.

We will check that
$$
{\cal F}'([s_{*}^{2}]_{{\cal H}_2}\circ [s_{*}^{1}]_{{\cal
H}_2}])= {\cal F}'([s_{*}^{2}]_{{\cal H}_2}) \circ {\cal
F}'([s_{*}^{1}]_{{\cal H}_2}]).
$$

Indeed,
$$
[s_{*}^{2}]_{{\cal H}_2}\circ [s_{*}^{1}]_{{\cal H}_2}: T_1\to T_3
$$
and
\begin{equation}\label{eq:F'(s2s1)}
 {\cal F}'([s_{*}^{2}]_{{\cal H}_2}\circ [s_{*}^{1}]_{{\cal H}_2}):
\alpha_{\varphi}^{-1}T_1 \to \alpha_{\varphi}^{-1}T_3.
\end{equation}
From the other hand,
$$
{\cal F }'([s_{*}^{1}]_{{\cal H}_2}): \alpha_{\varphi}^{-1} T_1
\to \alpha_{\varphi}^{-1} T_2
$$
and
$$
{\cal F}'([s_{*}^{2}]_{{\cal H}_2}): \alpha_{\varphi}^{-1} T_2 \to
\alpha_{\varphi}^{-1} T_3.
$$
Thus, the composition of functors ${\cal F}'([s_{*}^{1}]_{{\cal
H}_2})$ and ${\cal F}'([s_{*}^{2}]_{{\cal H}_2})$ works as
follows:
\begin{equation}\label{eq:F'(s2)F'(s1)}
{\cal F}'([s_{*}^{2}]_{{\cal H}_1})\circ {\cal
F}'([s_{*}^{1}]_{{\cal H}_2}): \alpha_{\varphi}^{-1} T_1 \to
\alpha_{\varphi}^{-1} T_3.
\end{equation}
Summarizing equations~(\ref{eq:F'(s2s1)}) and
(\ref{eq:F'(s2)F'(s1)}), we get 
$$
{\cal F}'([s_{*}^{2}]_{{\cal H}_2}\circ [s_{*}^{1}]_{{\cal
H}_2})={\cal F}'([s_{*}^{2}]_{{\cal H}_2})\circ {\cal
F}'([s_{*}^{1}]_{{\cal H}_2}).
$$

Thus, the correspondence ${\cal F}': F_{\Theta}({\cal H}_2) \to
F_{\Theta}({\cal H}_1)$ is a covariant functor.

Moreover, one can check that ${\cal F}'$ is the inverse functor
for ${\cal F}$ and categories $F_{\Theta}({{\cal H}_1})$ and
$F_{\Theta}({{\cal H}_2})$ are isomorphic. Theorem is proved.
\end{proof}

The next theorem is the main result concerning  logically
automorphically equivalent knowledge bases.

\begin{theorem}\label{th:Aut-Equiv-KB}
Logically automorphically equivalent knowledge bases   $KB({\cal
H}_1)$ and  $KB({\cal H}_2)$ are informationally equivalent.
\end{theorem}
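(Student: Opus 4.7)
The plan is to derive this theorem as an almost immediate consequence of Theorem~\ref{th:Aut-Equiv-Model}, which has just been established. The heavy lifting, namely building the functors $\mathcal{F}$ and $\mathcal{F}'$ from the automorphism $\varphi$ and the natural isomorphism $\alpha_\varphi$, has already been done there; what remains is only to unpack the definitions.

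First, I would invoke Definition~\ref{def:AutomEquiv-KB}: the assumption that $KB(\mathcal{H}_1)$ and $KB(\mathcal{H}_2)$ are logically automorphically equivalent means, by definition, that the underlying models $\mathcal{H}_1=(H_1,\Psi,f_1)$ and $\mathcal{H}_2=(H_2,\Psi,f_2)$ are logically automorphically equivalent. Then I would cite Theorem~\ref{th:Aut-Equiv-Model} to conclude that the categories of knowledge description $F_\Theta(\mathcal{H}_1)$ and $F_\Theta(\mathcal{H}_2)$ are isomorphic. Finally, I would appeal to Definition~\ref{def:InformEquivKB}: since the isomorphism of the categories of knowledge description is precisely the definition of informational equivalence of $KB(\mathcal{H}_1)$ and $KB(\mathcal{H}_2)$, the conclusion follows immediately.

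There is essentially no obstacle here, and no new computation or construction is required: the proof is a one-line chain \emph{Definition \ref{def:AutomEquiv-KB} $\Rightarrow$ Theorem \ref{th:Aut-Equiv-Model} $\Rightarrow$ Definition \ref{def:InformEquivKB}}. If anything needed additional comment, it would be the remark that Definition~\ref{def:InformEquivKB} can equivalently be phrased in terms of the isomorphism of the categories of knowledge content $D_\Theta(\mathcal{H}_1)$ and $D_\Theta(\mathcal{H}_2)$ (by Remark~\ref{rem:Def-InformEquiv-KB}, which uses the duality of Theorem~\ref{th:dual:CatF-CatD}); one could therefore optionally strengthen the statement by noting that the constructed functor $\mathcal{F}$ automatically induces an isomorphism of the knowledge content categories as well, via the knowledge functors $Ct_{\mathcal{H}_1}$ and $Ct_{\mathcal{H}_2}$. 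But this is cosmetic and not needed to prove the theorem as stated.
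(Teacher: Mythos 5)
Your proposal is correct and follows exactly the same route as the paper: unpack Definition~\ref{def:AutomEquiv-KB} to pass to the models, apply Theorem~\ref{th:Aut-Equiv-Model} to obtain the isomorphism of $F_\Theta({\cal H}_1)$ and $F_\Theta({\cal H}_2)$, and conclude via Definition~\ref{def:InformEquivKB}. The optional remark about the knowledge content categories is consistent with Remark~\ref{rem:Def-InformEquiv-KB} and, as you say, not needed.
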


\begin{proof}
Remind that knowledge base $KB({\cal H}_1)$ and $KB({\cal H}_2)$
are informationally equivalent, if the categories of knowledge
description $F_\Theta({\cal H}_1)$ and $F_\Theta({\cal H}_2)$ are
isomorphic (see Definition~\ref{def:InformEquivKB}).

According to Theorem~\ref{th:Aut-Equiv-Model}, logically
automorphical equivalence of models  ${\cal H}_1$ and ${\cal H}_2$
implies isomorphism of categories of knowledge description
$F_\Theta({\cal H}_1)$ and $F_\Theta({\cal H}_2)$. In turn, this
means that the knowledge base $KB({\cal H}_1)$ and $KB({\cal
H}_2)$ are informationally equivalent.
\end{proof}


\end{document}